 \documentclass[11pt]{article}

\textwidth=16true cm \textheight=23.5true cm \topmargin=-2.0cm
\topskip=1true cm \raggedbottom \abovedisplayskip=3mm
\belowdisplayskip=3mm \abovedisplayshortskip=0mm
\belowdisplayshortskip=2mm \oddsidemargin=0cm \language=0

\usepackage{authblk}
\usepackage{amsmath,amssymb}
\usepackage{epsfig, graphicx,fleqn,xspace,color,graphics}

\usepackage{helvet,graphicx,makeidx,multicol}

\usepackage{float}
\usepackage{setspace}
\usepackage{cite}

\usepackage{xcolor}

\usepackage{algorithm}
\usepackage{framed}
\usepackage{algpseudocode}
\newfloat{alg}{thp}{lop}
\floatname{alg}{Algorithm}

\usepackage{xspace}
\newenvironment{proof}{\noindent{\bf Proof.} }{\null\hfill$\Box$\par\medskip}

\newtheorem{theorem}{Theorem}[section]

\newtheorem{lemma}{Lemma}[section]
\newtheorem{corollary}{Corollary}[section]

\newtheorem{conjecture}{Conjecture}[section]

\newtheorem{definition}{Definition}[section]

\newcounter{smallitemizec}

\newfloat{code}{thp}{lop}
\date{}
\begin{document}
\title{Exploration of High-Dimensional Grids by Finite State Machines}
\author[1]{Stefan Dobrev}
\author[2]{Lata Narayanan} 
\author[2]{Jaroslav Opatrny} 
\author[2]{Denis Pankratov}
\affil[1]{Institute of Mathematics, Slovak Academy of Sciences, Bratislava, Slovakia}
\affil[2]{Depertment of CSSE, Concordia University, Montreal, Canada}
\maketitle

\begin{abstract}
We consider the problem of finding a treasure at an unknown point of an $n$-dimensional infinite grid, $n\geq 3$,  by initially collocated finite state agents (scouts/robots).
Recently, the problem has been well characterized for 2 dimensions for deterministic as well as randomized agents, both in synchronous and semi-synchronous models~\cite{Brandt2018,EmekLSUW15}. 
It has been conjectured that $n+1$ randomized agents are necessary to solve this problem in the $n$-dimensional grid~\cite{CohenELU2017}. 
In this paper we disprove the conjecture in a strong sense: we show that {\em three} randomized synchronous agents suffice to explore an $n$-dimensional grid for {\em any} $n$.
Our algorithm is optimal in terms of the number of the agents. 
Our key insight is that a constant number of finite state machine agents can, by their positions and movements, implement a stack, which can store the path being explored. We also show how to implement our algorithm using:  four randomized semi-synchronous agents; four deterministic synchronous agents; or five deterministic semi-synchronous agents. 

We give a different algorithm that uses $4$ deterministic semi-synchronous agents  for the $3$-dimensional grid. This is provably optimal, and surprisingly,  matches the result for $2$ dimensions. 
%
 For $n\geq 4$, the time complexity of the solutions mentioned above is exponential in distance  $D$ of the treasure from the starting point of the agents. We show that in the deterministic case, one additional agent brings the time down to a polynomial.
Finally, we focus on algorithms that never venture much beyond the distance $D$. We describe an algorithm that uses $O(\sqrt{n})$ semi-synchronous deterministic agents that never go beyond $2D$, as well as show that any algorithm using $3$ synchronous deterministic agents in $3$ dimensions must travel beyond $\Omega(D^{3/2})$ from the origin.
\end{abstract}

\section{Introduction} 
\label{sec:intro}

 Motivated by the self-organizing behaviour of ants and other social insects, swarm robotics leverages  the collective capability of a collection of extremely simple and inexpensive robots. Such robots have very limited computation and
communication capabilities, and yet can collectively perform seemingly complex tasks such as: forage for food \cite{CYLLWN2016}; form patterns \cite{FPSW2008};    pull heavy objects \cite{EMCCF18}; and  play {\em F\"{u}r Elise} on the piano \cite{CE2012}.  

A series of recent papers \cite{FeinermanKLS13,EmekLUW14,EmekLSUW15,Brandt2018,CohenELU2017} studies the conditions required for such primitive robots (also called agents or scouts) to search for a treasure placed at an unknown location in an infinite two-dimensional grid.  In particular,  they consider agents whose
behaviour is controlled by a finite automaton (FA), and that can only communicate with other agents that are at the exact same grid location as themselves. Furthermore, this communication is limited to a constant number of bits. The primary question of interest is: {\em how many } such agents are needed to search for a treasure located at an unknown location in an infinite $n$-dimensional grid for $n \geq 2$?
As shown in \cite{Brandt2018,EmekLSUW15} for $n=2$, the answer depends on the computational power of the agents: whether or not they have  access to random bits, the amount of memory
they have, and whether or not they are synchronized.  Note that for randomized algorithms, we require a finite mean hitting time for every node in the grid. The set of agents is {\em fully synchronous} if they operate by the same global clock; they are {\em semi-synchronous}\footnote{In some related literature~\cite{EmekLSUW15,CohenELU2017,EmekLSUW15} the same model was referred to as asynchronous. We follow the terminology of semi-synchronous of \cite{Brandt2018} and the vast literature on autonomous mobile robots to avoid confusion with a fully asynchronous model.} if in every time slot, a subset of adversarially scheduled agents is active. In our algorithms, all agents are finite automata, full details of the agent models are given in Section~\ref{sec:model}.

The case of the 2-dimensional grid has been completely characterized. It has been shown that  if the agents are deterministic and semi-synchronous, 4 agents are necessary \cite{Brandt2018} and sufficient \cite{EmekLSUW15}. If random bits are available to the agents,    3 agents are necessary \cite{CohenELU2017} and sufficient \cite{EmekLSUW15}, regardless of whether they are synchronous or semi-synchronous. Even without random bits, if the agents are fully synchronous, then 3 agents are necessary and sufficient \cite{EmekLSUW15}.

In \cite{CohenELU2017}, the authors proved that 3 agents are necessary to search the 2-dimensional grid, even if they are fully synchronized and are randomized. They conjectured that in an $n$-dimensional grid, $n+1$ agents would be necessary. 

\begin{conjecture} \cite{CohenELU2017}
For $n \geq 3$, any search strategy on the $n$-dimensional infinite grid requires at least $n+1$ agents. 
\end{conjecture}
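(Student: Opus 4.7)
The plan is to extend the 2-dimensional lower bound of Cohen et al.\ \cite{CohenELU2017} to $n$ dimensions by a direct generalization of their technique. Their argument in the plane roughly shows that two finite-state randomized agents cannot achieve finite mean hitting time everywhere, because the pair admits only one relative-position vector, which is at best one-dimensional information in the ambient 2D space. By analogy, one expects that $k$ agents carry only $k-1$ independent relative-position vectors, so $k\le n$ agents should fail to explore $n$-dimensional space, giving the claimed lower bound of $n+1$.

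First I would formalize the configuration of a $k$-agent system (for $k\le n$) using translation invariance of the transitions: the relevant state is the tuple of internal automaton states together with the $k-1$ relative position vectors. Second, I would attempt to define a potential measuring the ``effective dimension'' spanned by the agents over a short time horizon, and show by an invariant/coupling argument that with positive probability this dimension stays at most $n-1$ for all time when $k\le n$. Third, I would use this to conclude that the agents' reachable region grows asymptotically like a thin tube around a proper affine subspace, so that a treasure placed at distance $D$ in a ``neglected'' direction admits an adversarial schedule (semi-synchronous case) or an adversarial direction (synchronous case) forcing infinite expected hitting time, contradicting the requirement of finite mean hitting time.

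The main obstacle I foresee is step two: the dimension-counting intuition is fundamentally a continuous-geometry argument, whereas the agents live in the integer grid $\mathbb{Z}^n$. Even a single coordinate of a single agent can encode an unbounded integer, so finite-state agents can ``count'' by oscillating, and a small team can in principle use their collective position vector as unbounded memory --- for example, a stack of past moves. Nothing in a naive dimension invariant prevents their trajectories from eventually probing all $n$ directions. Any argument that survives this would need to identify a stronger structural property of $k$-agent dynamics that resists coordinate-based encoding of unbounded state, perhaps via a careful accounting of how co-location events (the only means of communication) can be scheduled to transmit information. Absent such an insight, the program above collides with exactly the technique this paper exploits positively --- a constant number of agents implementing a stack --- and so the dimension-counting route looks unlikely to close the gap.
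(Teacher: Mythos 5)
There is no proof of this statement to compare against, because the statement is false: the entire point of this paper is to \emph{disprove} the conjecture of Cohen et al. Theorem~\ref{thm:rand} shows that three randomized synchronous finite-state agents suffice to explore the $n$-dimensional grid for every $n \geq 3$, and Theorem~\ref{thm:det} gives deterministic variants with four or five agents. So no proof strategy for the conjecture can succeed, and your proposal cannot be completed.

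That said, your final paragraph correctly identifies the precise reason your program collapses, and it is worth making that diagnosis explicit. The dimension-counting intuition --- that $k$ agents carry only $k-1$ relative-position vectors and hence only ``$k-1$ dimensions of information'' --- treats each relative position as a bounded or low-information quantity. But a single relative distance between two agents is an unbounded nonnegative integer, and the paper shows (Section~\ref{sec:ndim}) that a constant number of agents can use one such distance as a unary encoding of an arbitrarily long binary string, i.e., as a stack recording the entire exploration path across all $n$ dimensions. Your proposed potential (``effective dimension spanned over a short time horizon'') cannot be made into an invariant: the agents deliberately keep their positions confined to a single line (the first coordinate axis) while the \emph{logical} content of that line encodes movement decisions in all $n$ dimensions, which are then replayed by pop operations. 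Any lower bound that survives for $n \geq 3$ would have to rule out this kind of positional encoding of unbounded memory, and the paper's Section~\ref{sec:lb-3-synch} shows the best one can currently salvage in this direction is a space lower bound (any $3$-agent synchronous deterministic algorithm in $\mathbb{Z}^3$ must send some agent $\Omega(D^{3/2})$ from the origin), not an impossibility result. In short: your obstacle is not a technical difficulty to be overcome but a counterexample to the conjecture itself.
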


The main result of this paper is to disprove the above conjecture; we show that three randomized synchronous agents, or 5 deterministic semi-synchronous agents can explore any $n$-dimensional grid. These algorithms  are completely different from previous algorithms for grid exploration, and are based on the key insight that a constant number of finite state machine agents can, by their positions and movements, implement a stack that stores the path being explored. 

\subsection{Our results}

First, we show that 
 in the $3$-dimensional grid, $4$ deterministic semi-synchronous agents are sufficient for grid exploration. We give  an algorithm which is similar to  the algorithm for the $2$-dimensional grid given in \cite{EmekLSUW15}, but with an important modification that enables exploration of the 3-dimensional grid without increasing the number of agents. Our algorithm is optimal in the explored space and also in the number of agents, since 4 agents are necessary to explore even the 2-dimensional grid.

Our main result is an algorithm for 3 randomized synchronous agents  to explore an $n$-dimensional grid for any $n \geq 3$. This result is optimal, since 3 agents are necessary to explore even the 2-dimensional grid. Next we show how to "derandomize" the algorithm with the addition of one agent. If the agents are semi-synchronous, the algorithm can be implemented with the addition of one more agent, in both the randomized and deterministic cases. Table~\ref{table:summary} shows our results.

\begin{table}[h] \label{table:summary}
\begin{center}
\begin{tabular}{||c|c|l||}
\hline
\hline
Model & Number of agents & Section \\ \hline\hline
Randomized Synchronous & $3^*$ & Section~\ref{SSec:nRandom} \\ \hline
Randomized Semi-synchronous & 4 & Section~\ref{SSec:nRandom} \\ \hline
Deterministic Synchronous & 4 & Section~\ref{SSec:det} \\ \hline
Deterministic Semi-synchronous & \begin{tabular}{@{}l@{}}$4^*$ ($n=3$)  \\ 5  ($n \geq 3)$ \end{tabular} & \begin{tabular}{@{}l@{}} Section~\ref{sec:3dim} \\ Section~\ref{SSec:det}  \end{tabular} \\ \hline \hline
\end{tabular}
\caption{Exploration of an $n$ dimensional infinite grid. Numbers marked with $^*$ indicate that the optimal number of agents is used.}
\end{center}
\end{table}

The algorithms mentioned in Table~\ref{table:summary}, except the 4-agent deterministic semi-synchronous algorithm for the three-dimensional grid, have an exploration cost/time  that is exponential in the volume of the smallest ball containing the treasure. In Section~\ref{sec:ndimpol}, we give a deterministic
synchronous algorithm for exploring the $n$-dimensional grid that uses 5 agents and  takes time polynomial in $D$, the distance from the origin to the treasure. A semi-synchronous implementation of this algorithm uses 6 agents. In Section~\ref{sec:bounded}, we give a lower bound of $\Omega(D^{3/2})$ on the distance from the origin that must be travelled by some agent in any 3-agent deterministic synchronous algorithm, and give an algorithm using $O(\sqrt{n})$  deterministic semi-synchronous agents in which no agent travels distance more than $2D$.

\subsection{Related work}

First introduced by Beck~\cite{Beck1964} and Bellman~\cite{bellman1963optimal},  the cow-path problem is the problem of minimizing the time required for search for a treasure on an infinite line by a single agent. Since then many variants have been studied, including search on the plane, and by multiple robots \cite{Baeza-YatesCR88,Baeza-YatesS95,Beck1965more,Beck1973return,BrandtFRW17, ChrobakGGM15,GhoshK10,JezL09,LiC09a,Lopez-OrtizS01,Bose13}. 
Evacuation, or group search by a set of collaborating robots where the objective is to minimize the time
the {\em last} robot arrives at the treasure has been the focus of many recent papers (see for example \cite{ChrobakGGM15,CzyzowiczGGKMP14}). Two models of communication have been studied for the collaborating robots: wireless, or face-to-face. In the latter model, similar to the model we use in this paper, the robots communicate only if they are at the same place at the same time. 

Graph exploration is a much-studied problem (see, for example, \cite{Deng90,Fraigniaud2005,BENDER20021,Bender1994,Brass2011,flocchini2010} and references therein). The study of the exploration of labyrinths  by agents with pebbles is related to our work. A labyrinth is a two-dimensional grid with some blocked cells; it is called finite if a finite number of cells is blocked. It was shown in \cite{Blum77} that finite 2D labyrinths can be explored by one FA agent with four pebbles, but no collection of FA agents can search 3D maze.
Our algorithm in Section~\ref{sec:ndim} 
can be implemented using a single FA agent with four pebbles to explore $n$-dimensional grids, albeit with no blocked nodes. Recently it was shown that $\Theta(\log \log n)$ pebbles are necessary and sufficient for exploration of arbitrary unknown undirected graphs by a single FA agent. 

Aleliunas {\em et al} \cite{Al79} showed that a  random walk by a single agent has a polynomial hitting time on a finite graph. On the infinite $n$-dimensional grid, it is known that every node on the grid can be reached with probability 1 if and only if $n \leq 2$.  However, the mean hitting time of some nodes is infinite, even for $n=1$. Exploration with 3 non-interacting random walks achieves a finite mean hitting time for all nodes a one-dimensional grid, but on a two-dimensional grid, this is not possible with any finite number of non-interacting random walks. 

A large body of work is devoted to the capabilities of autonomous mobile robots with very limited computational and communication abilities; see \cite{FPS12} for a comprehensive introduction. While we borrow some of the terminology in Section~\ref{sec:model}, their robots are usually assumed to be identical, anonymous, and communication is limited to being able to "see" each other's positions, regardless of how far they are. In contrast, in our model, the robots follow different algorithms (or they can be assumed to start at different states of the same FSM), can only communicate if they are at the same location, though the communication is limited to a constant number of bits. Equivalently, they can be assumed to see the current states of other robots at the same location. This is similar to the "robots with lights" model in the autonomous mobile robot literature \cite{DasFPSY2016}. 

The research most related to our work was initiated by Feinerman {\em et al}  in \cite{FeinermanKLS13}, which introduced the problem of $k$ randomized mobile agents, starting from the same initial position, and searching for a treasure at an unknown location on the two-dimensional infinite grid. In their model, the agents are Turing machines, but cannot communicate at all. They show that if the agents have a constant approximation of $k$, the treasure can be found optimally in time $O(D + D^2/k)$, where $D$ is the distance between the initial location and the treasure. The authors of  \cite{EmekLUW14} consider semi-synchronous and randomized FA agents and show that the same time complexity can be achieved. The relationship between the number of random bits available and the search time was studied in \cite{LenzenLNR14}. 

Emek  {\em et al} \cite{EmekLSUW15} posed the question of {\em how many} agents are required to find the treasure. They studied deterministic as well as randomized agents, synchronous as well as semi-synchronous agents, and FA agents, as well as agents that are controlled by a push-down automaton (PDA). They show that the problem can be solved by any of the following: 4 deterministic semi-synchronous FA agents; 3 deterministic synchronous agents;  3 randomized semi-synchronous
FA agents; 1 deterministic FA together with 1 deterministic PDA agent; 1 randomized PDA agent. On the negative side they show that the problem cannot be solved by 2 deterministic (synchronous) FA agents;  a single randomized FA agent; a deterministic PDA agent. Cohen {\em et al} \cite{CohenELU2017} prove that at least 2 FA agents are necessary to explore the one-dimensional grid and at least 3 FA agents are needed to explore the two-dimensional grid, thus proving the optimality of the FA-agent deterministic synchronous and randomized semi-synchronous
algorithms in \cite{EmekLSUW15}. Recently it was shown that 3 deterministic semi-synchronous FA  agents cannot perform exploration of the 2-dimensional grid \cite{Brandt2018}, thus proving the optimality of the 4 FA-agent deterministic synchronous algorithm in \cite{EmekLSUW15}.

\section{Model and Notation}
\label{sec:model}
We use the same models (with an exception of Section~\ref{sec:unoriented} on unoriented grids) as in \cite{EmekLSUW15,Brandt2018}. For completeness, we recall key definitions and introduce some notation in this section.

Our search domain is $\mathbb{Z}^n$ with the \emph{Manhattan metric}, i.e., the distance between two points $p, p' \in \mathbb{Z}^n$ is defined as $||p - p'|| = \sum_{i=1}^n |p_i - p'_i|$. We refer to $\mathbb{Z}^n$ as the \emph{$n$-dimensional integer grid} and its elements as \emph{grid points, points, or cells}. A grid point $p=(p_1,p_2, \ldots p_i,\ldots, p_n)$ is \emph{adjacent} to every grid point $(p_1,p_2, \ldots p'_i,\ldots, p_n)$, where $|p_i-p_i'|=1$ for some $i$, $1\leq i\leq n$. Thus $p$ is adjacent to grid points whose coordinates differ from those of $p$ in exactly one dimension and exactly by 1. We assume that any two grid points cannot be distinguished from each other by an agent, and that includes the origin from which the search starts.

The search for the treasure in the grid is done using a fixed number of agents. We assume that each agent is a very simple device of very limited communication capabilities. Thus, an agent is modelled by a finite automaton, and two agents can exchange information with each other only when they occupy the same grid location at the same time. Initially, all agents are located in the same grid point. Without loss of generality we assume that this cell is the origin of the grid. The treasure is located at distance $D$ from from the origin and this distance is not known to the agents. We assume that 
the grid is oriented and the edges out of each grid point are labelled by dimensions.

Time is divided into discrete units. In each time unit an {\em active} agent performs a single {\em look-compute-move} cycle. In the look part of the cycle the agent sees the state of other agents located in its own grid point. At the compute part of the cycle the agent determines, using its own state and those it sees, to which adjacent node to move to, if at all. The agent also determines its new state. Such a move is then executed in the move part of the cycle. When we consider randomized algorithms, we assume that an agent has access to a random value during each compute cycle, as needed. 

We say that the system is {\em synchronous} if at each time unit all agents are active. We say that the system is {\em semi-synchronous} if at each time unit only a subset of agents, chosen by an adversarial scheduler, is active. In order to avoid trivial cases, one restriction on the adversarial scheduler is introduced --- it must schedule each agent infinitely often.

In addition to the question of whether $\mathbb{Z}^n$ can be fully explored by $k$ agents, we are also interested in the efficiency of such exploration procedures.  We refer to this measure of interest as \emph{the exploration cost}. Intuitively, we measure how long it takes for $k$ agents to visit all points in a sphere of radius $D$, as a function of $D$. Observe that such a sphere contains $\Theta(D^n)$ points, thus any algorithm having exploration cost $\Theta(D^n)$ is optimal to within a constant factor. In the synchronous model, this measure is simply the overall time taken by the robots. In the semi-synchronous model, the adversarial scheduler might schedule only one robot in each time step. In addition, the robot scheduled at a particular time step might be waiting to meet another robot and doesn't have to move. Thus, if we count the overall time taken by the robots, the adversary can make it as large as it desires. The more reasonable notion of the exploration cost in the semi-synchronous case is the total distance travelled by all robots required to visit all points in a sphere of radius $D$. Now that we have discussed this subtlety, we will abuse the terminology and use ``exploration cost'' and ``time'' interchangeably.
\section{Exploring $3$-dimensional Grids using $4$ Semi-Synchronous agents}
\label{sec:3dim}
In this section $e_1, e_2,e_3$ denote the vectors $(1,0,0), (0,1,0), (0,0,1)$, respectively. 

The  basic building block of the algorithm of \cite{EmekLSUW15} for the exploration of the 2D-grid with $4$ semi-synchronous agents  is the exploration of the perimeter of a right isosceles triangle containing the origin of the plane. The tip of the triangle is at distance $q$ from the origin with the shorter sides being the diagonals containing $2q+1$ vertices of the grid. Three of the agents are used to mark the vertices of the triangle, and the fourth one does the exploration of the sides. The value of $q$ is increased when the exploration of the triangle is finished, and the exploration of the perimeter of the larger triangle is done until the treasure is found.

%
Our algorithm for the $3D$ grids explores the sphere consisting of points at distance $q$ from the origin. In the Manhattan metric, these points are located on the triangular faces of a regular octahedron whose edges contain $q+1$ grid
vertices, see Figure \ref{fig:octo}.
\begin{figure}[h]
  \centerline{\includegraphics[width=8cm]{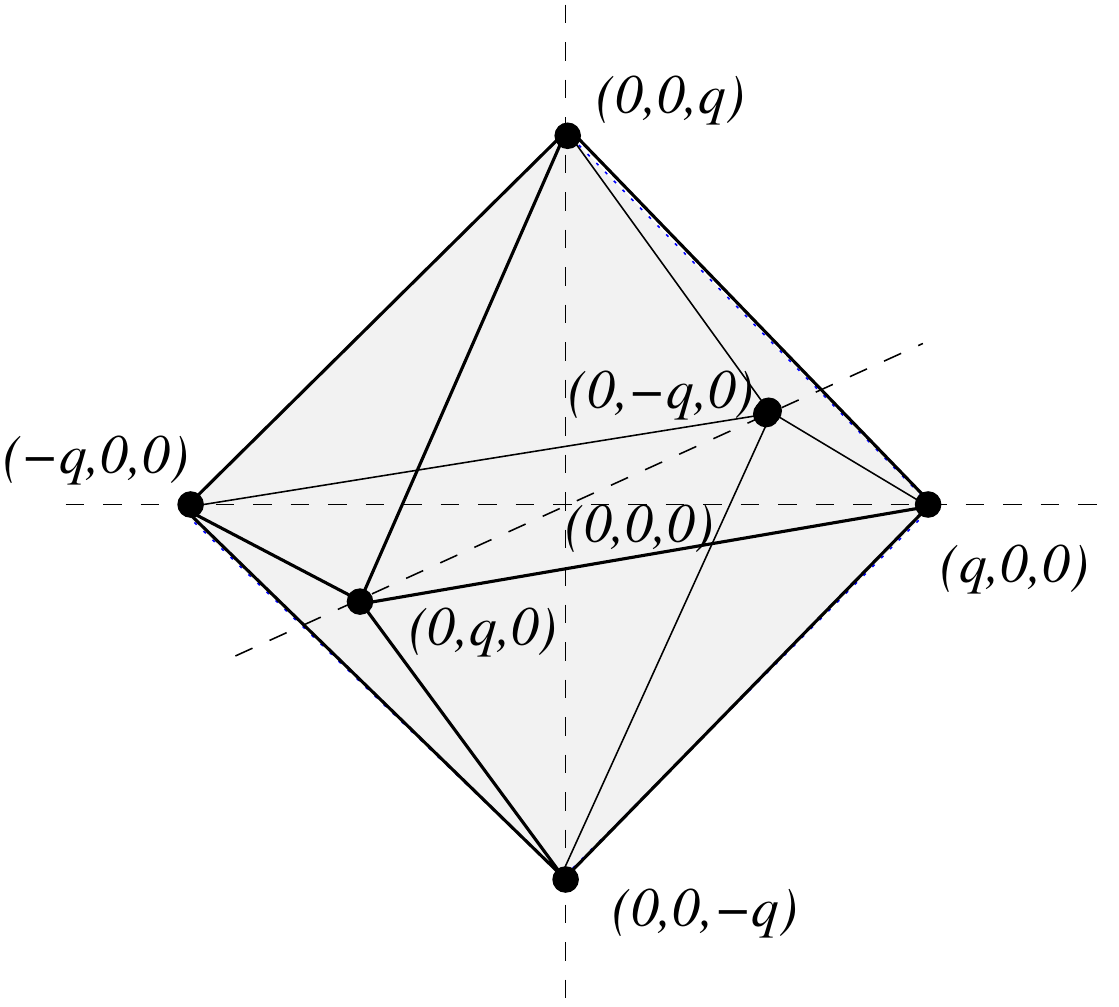}}
  \caption{Sphere of points at distance $q$ from the origin in the Manhattan metric.}
  \label{fig:octo}
\end{figure}
Thus, the basic building block of our algorithm is the exploration of all grid points on 
the {\em surface} of the equilateral triangle with vertices $r_1qe_1$, $r_2qe_2$ and $r_3qe_3$, where $r_i$'s are from $\{-1,1\}$.  
The key to our success is an algorithm for exploring one such triangle using four agents, so that\\
$-$ the value of $q$ is maintained by the distance between some of the agents while exploring a triangle, so that it can be used for the exploration of all triangles of the octahedron, \\
$-$ the exploration of all eight triangles can be done in a fixed order, and \\ 
$-$ the value of $q$ can be increased for the exploration of the larger sphere after the exploration of the sphere of radius $q$ is finished. 

We use the four agents as follows:
\begin{itemize}
\item $a$ is the {\em active} agent doing the exploration,
\item $b$ is the {\em base} agent that remains stationary during the exploration of a triangle, and
\item $c$ and $d$ mark the ends of a line segment to be explored.
\end{itemize}
The exploration of triangle  $r_iqe_i$, $r_jqe_j$, $r_kqe_k$ starts with three agents $a$, $b$ and $c$ located in $r_iqe_i$ and $d$ in node $r_jqe_j$
(see the leftmost triangle in Figure \ref{fig:3d4robots}), and ends with agent $b$ remaining in $r_iqe_i$, while the other agents are all in $r_kqe_k$. The implicit parameters $i$, $j$, $k$, $r_i$, $r_j$, $r_k$ where $i\neq j\neq k \in \{1,2,3\}$,  which control the direction of movements of agents are stored in the state of the agents. The exploration proceeds in phases; in one phase agent $a$ travels from $c$ to $d$ and back along the direction $e_j-e_i$. When  agent $a$ meets  agents
 $c$ and $d$ it pushes them one step towards $r_kqe_k$ along vectors $e_k-e_i$ and $e_k-e_j$, respectively,
as shown in the middle and rightmost triangles in Figure \ref{fig:3d4robots}.
\begin{figure}[h]
  \centerline{\includegraphics[width=15.5cm]{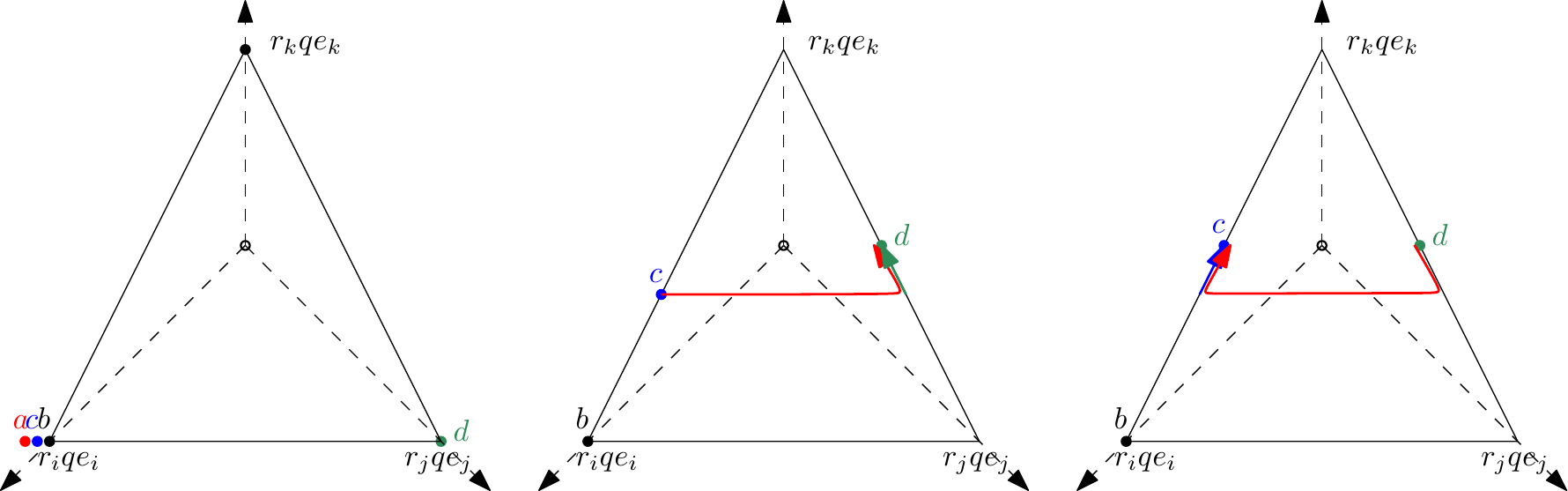}}
  \caption{Exploring a triangle using $4$ agents.}\label{fig:3d4robots}
\end{figure}

This phase is repeated until $c$ and $d$ meet at point  $r_kqe_k$, at which time the whole triangle has been explored. See the pseudocode of this exploration in Algorithm~\ref{alg:triangle}.

Observe that the value of $q$ is maintained during this exploration since $q=|bc|+|cd|$ at the beginning of each phase.  This is the crucial difference of our algorithm w.r.t. to that of \cite{EmekLSUW15}, which  keeps on increasing $q$ in order to explore the area of a triangle, making it unsuitable for  exploring the $3D$ sphere.
Furthermore, at the end of the exploration of  triangle  $r_iqe_i$, $r_jqe_j$, $r_kqe_k$, the robots are in position to start the exploration of the adjacent triangle with vertices  $r_iqe_i$ and $r_kqe_k$. It is easy to see that at the beginning of the scan of triangle  $r_iqe_i$, $r_jqe_j$, $r_kqe_k$, in the first move of agent
$a$ from $c$ to $d$ we could have moved $b$ to the location of $d$. After this modification  the robots would be in position to start the exploration of the other adjacent triangle with vertices  $r_jqe_j$ and $r_kqe_k$ at the end the the scan of the triangle. 
\begin{algorithm}[htb]
	\caption{Explore Triangle}
	\label{alg:triangle}
	\begin{algorithmic}[1]
		\State Implicit input: $q$, $i$, $j$, $k$, $r_i$, $r_j$, $r_k$
		\State On entry: $a$, $b$ and $c$ collocated at $r_iqe_i$, $d$ is at $r_jqe_j$.
		\State On exit: $b$ remains in $r_iqe_i$, $a$, $c$ and $d$ are collocated at $r_kqe_k$
		\Procedure{ExploreTriangle}{}
		\While {$c$ and $d$ do not meet}
		\State Let $d_{ik}$, $d_{jk}$ and $d_{ij}$ denote $r_ke_k-r_ie_i$, $r_ke_k-e_je_j$ and $r_je_j-r_ie_i$, respectively.
		\State $a$ goes in direction $d_{ij}$ until it meets $d$
		\State $a$ pushes $d$ on step in direction $d_{jk}$ and returns on step in direction $-d_{jk}$
		\State $a$ goes in direction $-d_{ij}$ until it meets $c$
		\State $a$ pushes $c$ one step in direction $d_{ik}$
		\EndWhile
		\EndProcedure
	\end{algorithmic}
\end{algorithm}

Exploration of the entire sphere of radius $q$ is done by doing a sequence of eight triangle explorations. Such a sequence corresponds to a Hamiltonian cycle in the $3$-dimensional hypercube, where vertices (determined by the values of $r_i$'s) represent the triangles to be explored. An edge in the Hamiltonian cycle corresponds to the transition to exploring the next triangle, and also to the edge connecting the just explored triangle with an adjacent triangle to be explored. Depending on which triangle  is explored next, the input invariant of ExploreTriangle can be made correct  by either $a$ leaving $c$ in place, or bringing $c$ to the  side of the next triangle, as pointed out  above. The order in which the faces are explored is independent of $q$ and  the agents keep the corresponding sequence of parameters $i$, $j$, $k$, $r_i$, $r_j$, $r_k$ in their states.  
Once all eights triangles have been explored, the agents are positioned back at the edge of the first triangle. At this point the value of $q$ is increased to $q+1$ by moving  agents $a$, $b$ and $c$  in direction $e_1$, and $d$ in direction $e_2$.

Thus our algorithm {\em Explore3Dgrid} simply repeats the exploration of the sphere 
of radius $q$  starting with $q=1$ and  increasing $q$ by 1 until the treasure is found.

Now we establish the exploration cost of our algorithm when it looks for the ``treasure'' located at distance $D$ from the origin. While exploring the area of a triangle with $q+1$ points at its edge, agent $a$ needs $2k+4$ steps to scan a line in the triangle containing $k+1$ points and to move agents $c$ and $d$. Thus, the exploration of a single triangle, i.e., a single facet of the octahedron, costs $\Theta(q^2)$. After that we need to reposition robots to the beginning configuration of exploring the next facet of the octahedron, and this repositioning costs at most $O(q^2)$. To explore all facets of the octahedron, we need to repeat this 7 more times. Thus, the exploration of all the facets of side length $q$ costs $\Theta(q^2)$. This needs to be repeated for $q$ from 1 to $D$, and thus our algorithm has exploration cost $\Theta(D^3)$, 
which is optimal up to a constant factor as discussed in Section~\ref{sec:model}.


Since at least four semi-synchronous agents are needed to explore a 2-dimensional grid
\cite{Brandt2018},
our result is also optimal as far as the number of semi-synchronous agents used by our algorithm is concerned. Thus we have the following theorem.
\begin{theorem} Assume that the treasure is located  in a 3D grid at distance $D$ from the origin. 
Algorithm {\em Explore3Dgrid} finds the treasure using 4 semi-synchronous agents, with the exploration cost of $O(D^3)$. 
  This is optimal as far as the number of semi-synchronous agents used, and up to a constant factor in the exploration cost.  
\end{theorem}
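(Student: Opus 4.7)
The plan is to split the proof into three parts: correctness of the sphere-exploration procedure, the $O(D^3)$ cost bound, and the optimality claims.

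For correctness I would first establish a loop invariant for \textsc{ExploreTriangle}: at the start of each iteration of the while-loop, $c$ lies on segment $[r_iqe_i, r_kqe_k]$ and $d$ lies on segment $[r_jqe_j,r_kqe_k]$ at equal combinatorial distance from $r_iqe_i$ and $r_jqe_j$ respectively, so the segment $cd$ is parallel to the edge $[r_iqe_i,r_jqe_j]$ translated by $t(e_k-\text{avg})$ for the appropriate offset $t$, and $|bc|+|cd|=q$. A simple induction on the phase number shows that each phase traces one full such parallel segment and then shortens it by one unit on each end, so after exactly $q+1$ phases all grid points of the closed triangular face are visited and the three mobile agents all reach $r_kqe_k$. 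Next I would verify that the eight octahedral faces, indexed by sign triples $(r_1,r_2,r_3)\in\{-1,1\}^3$, form the vertices of a $3$-cube whose edges correspond to pairs of faces sharing an octahedral edge; any Hamiltonian cycle of this $3$-cube gives an order in which consecutive faces share an edge, and the post-processing variant noted in the text (either leave $c$ in place at the shared vertex or push it to the partner vertex during the first phase) re-establishes the entry configuration of \textsc{ExploreTriangle} for the next face using only $O(q)$ extra moves. After the eighth face, the four agents are restored to the starting edge of face one, and the $q \mapsto q+1$ step is a constant-length movement of each agent along $e_1$ or $e_2$.

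For the cost bound, each phase of \textsc{ExploreTriangle} costs $\Theta(q)$ (two traversals of $cd$ of length at most $q$, plus two unit pushes), and there are $q+1$ phases, so a single face costs $\Theta(q^2)$; summing over the $8$ faces and the inter-face reconfigurations gives $\Theta(q^2)$ per shell of radius $q$. Summing $\sum_{q=1}^{D}\Theta(q^2)=\Theta(D^3)$ bounds the total exploration cost until the shell containing the treasure has been fully visited. Optimality of the agent count follows immediately from the cited $4$-agent lower bound of \cite{Brandt2018} for the $2$D semi-synchronous setting applied to the plane $\{x_3=0\}\subset\mathbb{Z}^3$, and optimality of the cost up to a constant factor follows because the closed Manhattan ball of radius $D$ in $\mathbb{Z}^3$ contains $\Theta(D^3)$ points, each of which must be visited.

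The main obstacle will be making the protocol genuinely semi-synchronous rather than synchronous. Under an adversarial but fair scheduler, the ``$a$ meets $c$'' and ``$a$ meets $d$'' events in the pseudocode are not atomic: for example, $a$ may arrive at $c$'s cell while $c$ is inactive, or $a$ may be activated in alternation with the pushed agent so that the relative positions temporarily desynchronize. I would add a small handshake layer in which, at each meeting, both participants enter dedicated ``partner present'' states and stay put until the scheduler activates each of them at least once, after which they atomically perform the push; by fairness this finishes in finitely many time steps per encounter while adding only $O(1)$ steps of \emph{work} (i.e., moves contributing to the exploration-cost measure defined in Section~\ref{sec:model}). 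Since exploration cost counts total distance traveled rather than wall-clock time, the $O(D^3)$ bound is preserved, and the loop invariants above still hold verbatim because only the timing, not the order or identity, of moves is changed.
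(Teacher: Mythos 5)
Your proposal follows essentially the same route as the paper: the same per-phase invariant ($|bc|+|cd|=q$, with $c$ and $d$ tracking a segment parallel to the edge $[r_iqe_i,r_jqe_j]$) for \textsc{ExploreTriangle}, the same Hamiltonian-cycle ordering of the eight octahedral faces with $O(q)$ reconfiguration between consecutive faces, the same $\sum_{q=1}^{D}\Theta(q^2)=\Theta(D^3)$ cost summation, and the same appeals to the lower bound of \cite{Brandt2018} and to the $\Theta(D^3)$ point count of the Manhattan ball for the two optimality claims. Your explicit handshake layer for the adversarial semi-synchronous scheduler is a detail the paper leaves implicit (it simply asserts that the token-pushing protocol works in that model), so you are, if anything, slightly more careful on that point.
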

\section{Exploration of $n$-dimensional Grids}
\label{sec:ndim}
A straightforward generalization of the algorithms for the exploration of 2D grids  \cite{EmekLSUW15} to $n$ dimensions results in algorithms that use $\Omega (n)$ agents. Consider, for example, such a simple generalization of a randomized 2D algorithm.   
The basic idea of the $n+1$-agent randomized algorithm for $n$ dimensions is to make an $n$-segment walk, starting from the origin, and walking the $i$-th segment along dimension $i$. The lengths of the segments are chosen randomly, and one agent per segment is used to mark its endpoint. This allows the agent to find the way back to the origin and start another random trial.

In essence, this algorithm uses $2$ agents per dimension to store in unary the distance travelled in this dimension, and by an appropriate arrangement we can reuse one of the agents in the successive dimension to bring the number of additional agents per dimension to $1$.

The main idea of our approach is a realization that it is not necessary to use $n+1$ agents to store $n$ numbers of segment lengths. Observe that segment lengths are stored and retrieved in this randomized algorithm in the first-in last-out order. Thus this algorithm can be realized if we can implement a stack of the agent's movements. Turns out we can use a {\em constant} number of agents, independent of the grid's dimension, to {\em implement a stack} into which the active agent, that does the exploration, stores its walk and then it can use it to return to the origin. The active agent carries the stack along its walk. 

\subsection{The Stack Implementation}
The format of data stored in the logical stack is the string $\alpha \in (0^*1)^n$, where $0$ represents {\em continue walking in the current direction}, $1$ represents {\em switch to the next dimension}. \\    
The physical implementation of the stack stores this data by interpreting $\alpha^r$ (that is $\alpha$ reversed) as a binary number $S$ and storing it in unary as a distance between two agents located in a row in the first dimension.

We employ the following agents:
\begin{itemize}
	\item $a$: the {\em active} agent that is doing the exploration of the grid; in the semi-synchronous model this is the only agent moving around and manipulating the other agents,
	\item $b$: the {\em base} of the stack, from which measurements are taken, and representing the current logical location of the exploration,
	\item $c$: the {\em counter} agent; this is an auxiliary agent for implementing the stack operations in the semi-synchronous model,
	\item $d$: the {\em distance} agent; its distance from the base $b$ stores the content of the stack,
	\item $e$: the {\em extra} agent used in the deterministic algorithms to store an extra copy of the current stack value.
\end{itemize}
The basic stack operations we need to implement are {\em isEmpty()}, {\em push(v)} where $v\in\{0,1\}$ and {\em pop()}. Operation 
{\em isEmpty()} simply returns whether $b$ and $d$ are collocated. Implementation of {\em push()} and {\em pop()} is model-dependent and given below. 

\subsubsection{Implementing Semi-Synchronous Stack}

Algorithms~\ref{alg:asynchPush} and \ref{alg:asynchPop} show the implementation of push and pop operations for the semi-synchronous stack.

\begin{algorithm}[htb]
	\caption{Semi-synchronous stack: push(v)}
	\label{alg:asynchPush}
	\begin{algorithmic}[1]
		\State On entry: $b$ and $c$ collocated, $a$ and $d$ collocated at $b+Se_1$.
		\State On exit: $b$ and $c$ collocated, $a$ and $d$ collocated at $b+(2S+v)e_1$.
		\Procedure{push}{$v$}
		\State $a$ goes to $b$ and brings $c$ to $d$
		\While {$b$ and $d$ are not collocated}
				\State $a$ goes to $c$, pushes it one step away from $b$ and returns to $d$ 
				\State $a$ pushes $d$ one step closer to $b$
		\EndWhile
		\State $d$ becomes $c$
		\State $a$ goes to $c$ and tells it to become $d$
		\If {$v$=1}
			\State $a$ pushes $d$ one step away from $b$
		\EndIf
		\EndProcedure
	\end{algorithmic}
\end{algorithm}

\begin{algorithm}[htb]
	\caption{Semi-synchronous stack: pop()}
	\label{alg:asynchPop}
	\begin{algorithmic}[1]
		\State On entry: $b$ and $c$ collocated, $a$ and $d$ collocated at $b+Se_1$.
		\State On exit: $b$ and $c$ collocated, $a$ and $d$ collocated at $b+\lfloor S/2\rfloor e_1$, returns $S \bmod 2 = 1$.
		\Procedure{pop}{}
		\While {$b$ and $c$ are at distance more than $1$}
				\State $a$ pushes $d$ one step closer to $b$
				\State $a$ goes to $c$ and pushes it one step away from $b$
		\EndWhile
		\State $v = d$ is one step from $c$
		\State $c$ and $d$ switch roles
		\State $a$ brings $c$ to $a$ and returns to $d$
		\State return $v$
		\EndProcedure
	\end{algorithmic}
\end{algorithm}

\subsubsection{Implementing Synchronous Stack}
In the synchronous model, we can synchronize the movements of agents  to effectively multiply or divide the stack content by $2$ without the need of the counter agent $c$,  see Figure \ref{fig:synchStack}.

\begin{algorithm}[htb]
	\caption{Synchronous stack: push(v)}
	\label{alg:synchPush}
	\begin{algorithmic}[1]
		\State On entry: $a$ and $d$ collocated at $b+Se_1$.
		\State On exit: $a$ and $d$ collocated at $b+(2S+v)e_1$.
		\Procedure{push}{$v$}
		\State $a$ goes to $b$ and then back towards $d$ until they meet, walking at full speed of $1$
		\State $d$ walks away from $b$ at speed $3$ (move, wait, wait, see Figure \ref {fig:synchStack})
		\If {$v$=1}
			\State $a$ pushes $d$ one step away from $b$
		\EndIf
		\EndProcedure
	\end{algorithmic}
\end{algorithm}

\begin{algorithm}[htb]
	\caption{Synchronous stack: pop()}
	\label{alg:synchPop}
	\begin{algorithmic}[1]
		\State On entry: $a$ and $d$ collocated at $b+Se_1$.
		\State On exit: $a$ and $d$ collocated at $b+\lfloor S/2\rfloor e_1$, returns $S \bmod 2 = 1$.
		\Procedure{pop}{}
		\State $a$ goes to $b$ and then back towards $d$ until they meet, walking at full speed of $1$
		\State $d$ walks towards $b$ at speed $3$ (move, wait, wait, see Figure \ref {fig:synchStack}) 
		\If {$a$ and $d$ meet right after $d$'s move}
			\State return $1$
		\Else
			\State return $0$
		\EndIf
		\EndProcedure
	\end{algorithmic}
\end{algorithm}
\begin{figure}[h]
  \centerline{\includegraphics[width=16cm]{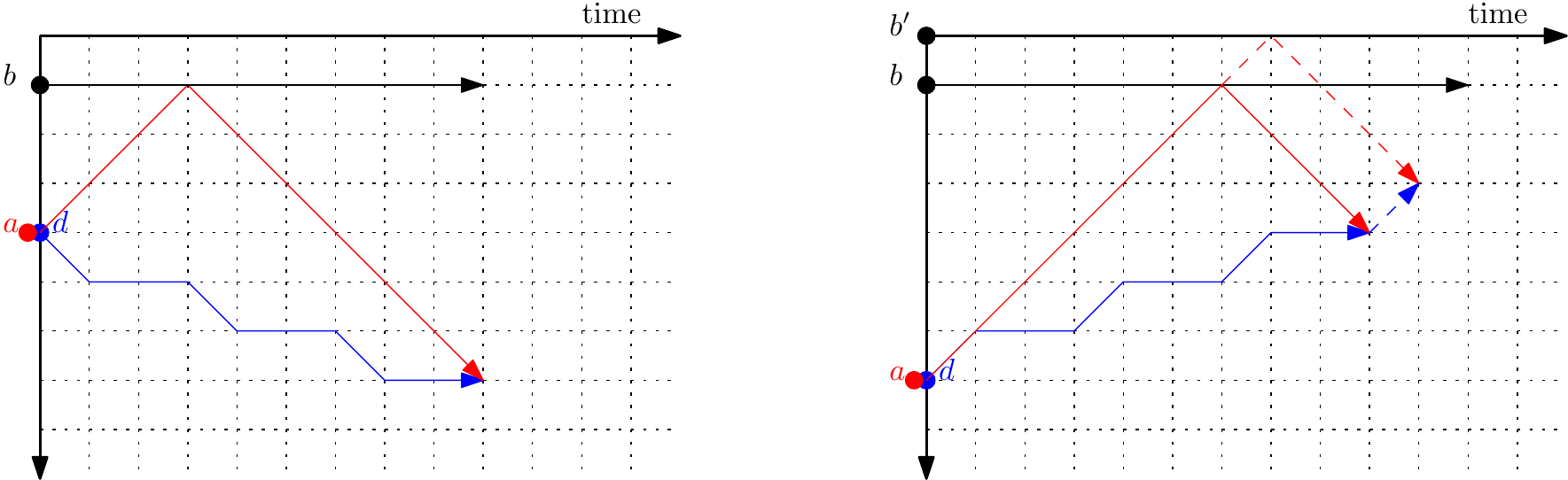}}
  \caption{Implementation of multiply (left) and divide (right) using synchronous agents. Both odd and even cases are shown for divide.}\label{fig:synchStack}
\end{figure}

\subsection{The Randomized Algorithm}
\label{SSec:nRandom}
As already stated in the initial part of this section, the main idea of the algorithm is to use the stack to store the random choices during the walk, so that the agent can return to the origin. The agent $a$ carries the stack along this walk so that the operations can be applied without the need to search for the stack.

In addition to the stack methods, it uses two new procedures. Procedure {\em random($p$)} returns $1$ with probability $p$, while {\em moveStack()} moves the whole stack one step in the direction specified. Note that since the whole stack is located on a single line, the agent $a$ can do that with finite memory.

\begin{algorithm}[h]
	\caption{Randomized Grid Exploration}
	\label{alg:randAlg}
	\begin{algorithmic}[1]
		\While {treasure not found}
		\State Pick a random $n$-bit string $R\in\{-1,1\}^n$
		\For {$i=1$ to $n$ }
			\While {random$(p)=0$}
				\State push($0$)
				\State moveStack($r_ie_i$)
			\EndWhile
			\If {$i<n$}
				\State push($1$)
			\EndIf
		\EndFor
		\State $i=n$
		\While {not empty()}
			\While {pop()$=0$}
				\State moveStack($-r_ie_i$)
			\EndWhile
			\State $i=i-1$
		\EndWhile
		\EndWhile
	\end{algorithmic}
\end{algorithm}

The algorithm works in rounds, which we number $1, 2, 3, \ldots$, which correspond to the iteration numbers of the outer while loop. At the beginning of each round, the active robot picks a binary string $R \in \{-1,1\}^n$ uniformly at random. This string indicates that the robot is going to explore dimension $i$ in direction $R_i$. Then for each dimension $i$ from $1$ to $n$, the active robot travels for $Z_i-1$ steps in direction $R_i$, where $Z_i$ is geometrically distributed with parameter $p$ (to be determined later). Note that we want $Z_i$ to represent the length of the string pushed onto the stack while moving in dimension $i$. Since the string pushed on the stack includes the ``separator'' between dimensions, we have the $-1$ term for the actual number of moves. We call the concatenation of all such moves over all dimensions \emph{the logical path of the active robot}. If no treasure is found, the active robot uses the stack to retrace its logical path back to the origin by travelling $Z_n-1$ steps in direction $-R_n e_n$ first, followed by $Z_{n-1}-1$ steps in direction $-R_{n-1} e_{n-1}$, and so on. To estimate the exploration cost of each round, we need a simple helper lemma.

\begin{lemma}\label{lm:oneWalkCost}
Let $S$ be the maximal stack size during one iteration of the outer while loop of Algorithm~\ref{alg:randAlg}. 
The overall cost of this iteration 
is $O(S^2)$ when implemented by semi-synchronous agents, and is  $O(S)$ when implemented by synchronous agents.
\end{lemma}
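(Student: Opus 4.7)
The plan is to track the evolution of the stack size $s$ across one iteration of the outer while loop and then collapse the resulting sums as geometric series. Let $L$ be the number of push operations performed in this iteration; this equals the number of pops, since the logical path is retraced in full during the return phase. Write $S_0 = 0, S_1, \ldots, S_L = S$ for the stack values immediately after the $j$-th push, and $T_0 = S, T_1, \ldots, T_L = 0$ for the stack values immediately after the $j$-th pop. Since each push replaces $s$ with $2s+v$ and each pop replaces $s$ with $\lfloor s/2\rfloor$, we have $S_j \le S/2^{\,L-j}$ and $T_j \le S/2^{j}$, up to additive constants; in particular the nonzero $S_j$ and $T_j$ form geometric sequences with common ratio $2$ and $1/2$ respectively.

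Next I would read off the per-operation costs directly from the pseudocode. In the semi-synchronous model (Algorithms~\ref{alg:asynchPush}, \ref{alg:asynchPop}), both \emph{push} and \emph{pop} on a stack of size $s$ are dominated by an inner loop that iterates $\Theta(s)$ times; on its $k$-th iteration the active agent $a$ shuttles across a subsegment of length $\Theta(k)$ between $c$ and $d$, giving a per-operation cost of $\Theta(s^2)$. In the synchronous model (Algorithms~\ref{alg:synchPush}, \ref{alg:synchPop}), $a$ makes a single round trip of length $2s$ to $b$ while $d$ drifts at effective speed $1/3$ until the two meet, so the cost is $\Theta(s)$. A call to \emph{moveStack} on a stack of size $s$ costs $O(s)$ in either model, since $a$ must traverse the stack once in order to shift each of the $O(1)$ stack agents by a single cell.

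Summing over the at most $L$ pushes, $L$ pops, and $L$ moveStack calls, and using the geometric bounds on $S_j$ and $T_j$, the semi-synchronous cost is
\[
\sum_{j=0}^{L-1} O(S_j^2) \;+\; \sum_{j=0}^{L-1} O(T_j^2) \;+\; \sum_{j=1}^{L} O(S_j) \;=\; O(S^2),
\]
since each quadratic sum is a geometric series with ratio $1/4$ (each therefore $O(S^2)$), and the linear moveStack sum is only $O(S)$. In the synchronous model every operation is linear in the current stack size, so the same collapse yields
\[
\sum_{j=0}^{L-1} O(S_j) \;+\; \sum_{j=0}^{L-1} O(T_j) \;+\; \sum_{j=1}^{L} O(S_j) \;=\; O(S).
\]
The only delicate step is extracting the $\Theta(s^2)$ versus $\Theta(s)$ per-operation cost bounds from the pseudocode of the stack operations; once those are in hand, the final totals follow by routine geometric-series collapses, and the moveStack contribution is always absorbed into the dominant push/pop cost.
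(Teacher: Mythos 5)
Your proof is correct and follows essentially the same route as the paper's: establish per-operation costs of $\Theta(s^2)$ (semi-synchronous) and $\Theta(s)$ (synchronous) for push/pop plus $O(s)$ for moveStack, then exploit the fact that the stack sizes grow and shrink geometrically so the sum telescopes to the cost at the maximum size $S$. You merely make the geometric-series collapse explicit where the paper states it in one sentence.
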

\begin{proof}
In the semi-synchronous model, each push() or pop() costs $O(X^2)$, where $X$ is the actual stack size, as the active agent zig-zags between $b$ and $d$.
On the other hand, in the synchronous model, the cost of each operation is linear in the stack size.
The cost of moving the stack is linear in both models.

As the stack size grows exponentially, and then reduces exponentially, the overall cost is determined by the cost when the stack is the largest, i.e. $O(S^2)$ and $O(S)$ for the semi-synchronous and synchronous models, respectively.
\end{proof}

Observe that during a given round the maximum size of the stack is $2^{Z_1 + \cdots + Z_n}$. Thus the exploration cost of each round is at most $2 (Z_1 + \cdots +Z_n) 2^{\Theta(Z_1 + \cdots + Z_n)}$, where $2(Z_1 + \cdots + Z_n)$ is the bound on the overall length of the logical path (there and back) of the active robot, and by Lemma~\ref{lm:oneWalkCost} each step of the active path costs $2^{\Theta(Z_1 + \cdots + Z_n)}$, since we need to perform operations on the stack of size $2^{Z_1 + \cdots + Z_n}$. Also note that $2 (Z_1 + \cdots +Z_n) 2^{\Theta(Z_1 + \cdots +Z_n)} = 2^{\Theta(Z_1 + \cdots +Z_n)}$. Let $c$ be the constant in the $\Theta$ notation such that the exploration cost of a round is at most $2^{c(Z_1 + \cdots +Z_n)}$.

For simplicity, we will assume that the active robot checks for the treasure only at the far end-point of the logical path in each round. This assumption might lead to a more pessimistic upper bound on the exploration cost than if  we assumed that the active robot checks for treasure at each grid point that it visits. However, our assumption simplifies the calculations and is sufficient for our purposes.

\begin{theorem} \label{thm:rand} Algorithm~\ref{alg:randAlg} locates the treasure in the $n$-dimensional grid in finite expected time, using either $4$ semi-synchronous or $3$ synchronous agents.
\end{theorem}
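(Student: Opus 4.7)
The plan is to analyze the algorithm as a sequence of independent identically distributed rounds, one per iteration of the outer \texttt{while} loop. In each round the active agent picks a fresh random direction vector $R \in \{-1,1\}^n$ and fresh geometric random variables $Z_1, \ldots, Z_n$ with common parameter $p$ (so that $Z_i - 1$ is the number of moves in dimension $i$), walks the induced logical path, checks for the treasure at the far endpoint, and then retraces back to the origin using the stack. I would establish the theorem by separately bounding (i) the probability that a single round is successful, and (ii) the expected cost of one round, and then combining the two.

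For (i), fix the treasure's location at $t = (t_1, \ldots, t_n) \in \mathbb{Z}^n$ with Manhattan distance $D$ from the origin. A round succeeds whenever $R_i = \operatorname{sign}(t_i)$ for every coordinate where $t_i \neq 0$ and $Z_i = |t_i| + 1$ for every $i$. Because the $R_i$ are uniform independent $\pm 1$ and the $Z_i$ are independent with $\Pr[Z_i = k] = (1-p)^{k-1} p$, the probability of this event is at least $q_{\mathrm{succ}} = (p/2)^n (1-p)^D > 0$ for any $p \in (0,1)$. So the number of rounds until a successful one is dominated by a geometric random variable with finite mean $1/q_{\mathrm{succ}}$.

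For (ii), set $Z = \sum_{i=1}^n Z_i$. At most $Z$ items (plus at most $n-1$ separator \texttt{1}s) are pushed onto the stack in a round, so the maximum stack value $S$ reached satisfies $S \leq 2^{Z+n}$. By Lemma~\ref{lm:oneWalkCost} the cost of a round is $O(S)$ synchronously and $O(S^2)$ semi-synchronously, hence at most $K \cdot 2^{cZ}$ for constants $K$, $c$ depending only on the model (for instance $c=1$ synchronous and $c=2$ semi-synchronous). Independence of the $Z_i$ together with the standard moment generating function of the geometric distribution give
\[
E\!\left[2^{cZ}\right] \;=\; \prod_{i=1}^{n} \frac{2^{c} p}{1 - 2^{c}(1-p)},
\]
which is finite whenever $p > 1 - 2^{-c}$. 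Picking any dyadic $p$ in $(1-2^{-c},\,1)$ — which exists, and lets an FSM simulate a Bernoulli$(p)$ trial from a fixed number of uniform random bits — makes the expected cost of a single round finite.

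Finally, since rounds are mutually independent, letting $T_j$ denote the cost of round $j$ a direct calculation gives
\[
E[\text{total time}] \;=\; \sum_{j \geq 1} E[T_j] \cdot (1 - q_{\mathrm{succ}})^{j-1} \;=\; \frac{E[T_1]}{q_{\mathrm{succ}}} \;<\; \infty,
\]
where independence is used across rounds but not within a round, so Wald-type concerns about correlation between $T_j$ and the success indicator do not arise. The agent counts match what was built in Section~\ref{SSec:nRandom}: $a, b, d$ suffice in the synchronous case, while the semi-synchronous stack additionally requires the counter $c$. The main obstacle I anticipate is reconciling two competing demands on $p$: smaller $p$ boosts the per-round success probability (for fixed $D$), while $p$ close to $1$ keeps the stack — and therefore the per-round cost — tame. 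These demands are compatible because $(1 - 2^{-c},\,1)$ is a nonempty open interval, so a single choice of $p$ works uniformly for \emph{every} treasure location.
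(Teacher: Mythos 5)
Your proof is correct and follows essentially the same route as the paper's: decompose the execution into i.i.d.\ rounds, lower-bound the per-round success probability by $2^{-n}(1-p)^n p^D$ (your $(p/2)^n(1-p)^D$ is the same quantity with the roles of $p$ and $1-p$ swapped in the geometric parametrization), bound the expected per-round cost via the moment generating function of the geometric walk lengths with $p$ chosen in the model-dependent range that keeps $\mathbb{E}\left[2^{cZ}\right]$ finite, and multiply. The only cosmetic difference is that the paper invokes a Wald-type identity for the stopped sum $X_1+\cdots+X_T$, whereas you expand $\sum_{j}\mathbb{E}[T_j]\Pr[T\ge j]$ directly using independence of round $j$ from the failure of rounds $1,\dots,j-1$ --- these are equivalent arguments (with the minor caveat that your final equalities should be inequalities since $q_{\mathrm{succ}}$ is only a lower bound on the success probability).
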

\begin{proof} 
Consider the infinite sequence of random variables $(X_i)_{i=1}^\infty$, where $X_i$ is the exploration cost of round $i$. Note that the $X_i$ are independent and identically distributed. Consider the exploration cost of a particular round, e.g., $X_1$. Then we have $X_1 \le 2^{c(Z_1 + \cdots +Z_n)}$, where the $Z_i$ and $c$ are as defined above. Then we have
\begin{align*}
\mathbb{E}(X_1) &\le \mathbb{E}\left(2^{c(Z_1 + \cdots +Z_n)}\right) \\
&= \sum_{i_1 = 1}^\infty \sum_{i_2 = 1}^\infty \cdots \sum_{i_n = 1}^\infty 2^{c(i_1 + \cdots+ i_n)} p^{i_1-1} (1-p) p^{i_2-1} (1-p) \cdots p^{i_n-1} (1-p)\\
&= \left(\sum_{i_1=1}^\infty (2^c p)^{i_1-1} 2^c(1-p)\right) \left( \sum_{i_2 = 1}^\infty (2^c p)^{i_2-1} 2^c(1-p)\right)\cdots \left(\sum_{i_n=1}^\infty (2^c p)^{i_n-1} 2^c(1-p)\right)\\
&=2^{cn}(1-p)^n \frac{1}{(1-2^c p)^n},
\end{align*} 
where the last step holds as long as $2^c p < 1$ that is $p < 1/2^c$.

Define a random variable $T$ to be the minimum $t$ such that the far end-point of $X_t$ coincides with the treasure. That is our exploration procedure terminates in round $T$, but not earlier. Suppose that the treasure is located at position $(k_1, \ldots, k_n)$ where $|k_1| + \cdots + |k_n| = D$. By the discussion immediately preceding the statement of this theorem, the probability that the treasure is found in a particular round is $\widehat{p} = 2^{-n} (1-p)^n p^{k_1} \cdots p^{k_n} = 2^{-n}(1-p)^n p^D$, where $2^{-n}$ is the probability of guessing correctly the signs of the $k_i$ and $p^{k_i} (1-p)$ is the probability of travelling the correct number of steps in dimension $i$. Thus $T$ is geometrically distributed with parameter $\widehat{p}$. Therefore, $\mathbb{E}(T) = 1/\widehat{p}$.

We are interested in bounding the overall exploration cost, that is $\mathbb{E}(X_1 + \cdots + X_T)$. Since the $X_i$ are i.i.d. and $T$ is a stopping time, it follows by a generalization of the Wald's equation~\cite{wald1944} to stopping times that
\[ \mathbb{E}(X_1 + \cdots + X_T) = \mathbb{E}(T) \mathbb{E}(X_1) \le \frac{1}{\widehat{p}} 2^{cn}(1-p)^n \frac{1}{(1-2^cp)^n} < \infty.\]
This holds as long as we choose $p < 1/2^c$. Since $c$ is a constant, such a probabilistic coin can be implemented by finite state machines. The statement of the theorem follows by the number of robots sufficient to implement stack operations in each of the models (synchronous vs. semi-synchronous).
\end{proof}

\subsection{The Deterministic Algorithm}
\label{SSec:det}
The main idea is to exhaustively go over all possible stack contents in increasing order, interpreting each stack as a specification of a walk. We also keep a backup of the initial stack content, and at the end of the walk we use the backup  to return to the origin. 
The back-up stack  is stored using an additional agent. The backup is needed, as reading the stack content during the walk destroys it. Note that after the outward walk, we do not logically reverse the stack; hence the return to the origin does not use the same path as the original walk. However, this is not a problem as the walks along different dimensions are commutative.

Finally, we should mention that some generated stacks do not necessarily have the correct format, some may contain too few or too many $1$s. However, this is easy to handle by the algorithm: too few ones just means we walked without using all of the  dimensions, which is still a perfectly valid walk. The excessive $1$s are simply ignored by taking the first excessive $1$ as a directive to end the walk and return to the origin. 

\begin{algorithm}[htb]
	\caption{Deterministic Grid Exploration}
	\label{alg:detAlg}
	\begin{algorithmic}[1]
		\While {treasure not found}
		\State Increment the backup stack
		\For {every $n$-bit string $R\in\{-1,1\}^n$}
		\State execute Walk($R$, $1$)
		\State execute Walk($R$, $-1$)
		\EndFor
		\EndWhile
		\State $ $	
		\Procedure{Walk}{$R$, $s$}
		\State Restore stack from backup
		\State $i=1$
		\While {not empty() and $i\leq n$}
			\While {pop()$=0$}
				\State moveStack($sr_ie_i$)
			\EndWhile
			\State $i=i+1$
		\EndWhile
		\EndProcedure
	\end{algorithmic}
\end{algorithm}

Using essentially the same arguments as in Lemma~\ref{lm:oneWalkCost} yields
\begin{lemma}\label{lm:procWalkCost}
The cost of procedure Walk is $O(S^2)$ and $O(S)$ in the semi-synchronous and synchronous models, respectively, where $S$ is the size of the backup stack.
\end{lemma}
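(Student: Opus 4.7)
The plan is to follow the template of Lemma~\ref{lm:oneWalkCost}. Procedure Walk consists of three kinds of primitive work: the initial restoration of the main stack from the backup held by agent~$e$, a sequence of pop() operations on that restored stack, and moveStack() invocations issued whenever pop() returns~$0$. I would bound each contribution separately and sum.

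For the pops and moveStacks, the key observation is that every pop() halves the stored integer, so the stack sizes encountered during the inner loops form a geometric sequence $S,\ \lfloor S/2\rfloor,\ \lfloor S/4\rfloor,\ldots, 1$, and in particular only $O(\log S)$ pops are ever performed (regardless of how the $1$-separators partition the iterations across dimensions $i$). Plugging in the per-operation costs already established for the stack implementations --- namely, $O(X^2)$ per pop in the semi-synchronous model, $O(X)$ per pop in the synchronous model, and $O(X)$ per moveStack in both --- the total cost of these two categories is bounded by $\sum_{j \ge 0} O((S/2^j)^2) = O(S^2)$ semi-synchronously and $\sum_{j\ge 0} O(S/2^j) = O(S)$ synchronously, exactly as in the proof of Lemma~\ref{lm:oneWalkCost}.

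The only new ingredient compared with Lemma~\ref{lm:oneWalkCost} is the restoration of the main stack from the backup. I would argue that this step takes $O(S)$ moves in both models: agent~$a$ walks $d$ outward from $b$ one unit at a time, using its meeting with the stationary reference agent~$e$ as a stop signal; this both produces the correct distance $S$ between $b$ and $d$ and leaves $e$ (and hence the backup value) undisturbed. Since $O(S)$ is absorbed into both $O(S)$ and $O(S^2)$, it does not affect the final bounds. The main subtlety to be checked is the implementation of restoration under the FA-with-constantly-many-states constraint --- one has to confirm that $a$ can reliably detect collocation with $e$ while pushing $d$ outward, without resorting to any counting it cannot afford --- but once that routine verification is done, the lemma follows by adding the three geometric-series contributions.
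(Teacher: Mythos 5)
Your proof is correct and follows essentially the same route as the paper, whose entire proof of this lemma is a one-line reference to the argument of Lemma~\ref{lm:oneWalkCost}: the per-operation costs of pop() and moveStack() are summed over a geometrically decreasing sequence of stack sizes, so the total is dominated by the terms at the maximal size $S$, giving $O(S^2)$ and $O(S)$ respectively. Your explicit $O(S)$ accounting of the restore-from-backup step (which the paper leaves implicit) is a reasonable addition and is absorbed into both bounds, as you note.
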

\begin{theorem} \label{thm:det} Algorithm~\ref{alg:detAlg} locates the treasure in the $n$-dimensional grid with:\\
\hspace*{4mm} $5$ agents and  the exploration cost of $O(2^{3D+4n})$ moves in the semi-synchronous model, and \\
\hspace*{4mm} $4$ agents  and the exploration cost of $O(2^{D+2n})$ in the synchronous model.
\end{theorem}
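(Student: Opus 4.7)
The plan is to decompose the proof into three parts: correctness (the outer loop eventually produces a pair $(s^\ast,R)$ whose Walk$(R,1)$ lands on the treasure), cost (combining Lemma~\ref{lm:procWalkCost} with a bound on $s^\ast$), and agent count.

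For correctness, fix the treasure at $(k_1,\ldots,k_n)$ with $D=\sum_i|k_i|$ and set $R_i=\operatorname{sign}(k_i)$, breaking the tie $k_i=0$ arbitrarily. Because push and pop realise $S\mapsto 2S+v$ and $S\mapsto\lfloor S/2\rfloor$, the first pushed bit sits at the MSB and the last at the LSB; hence the integer $s^\ast$ whose binary expansion, written MSB to LSB, is $0^{|k_n|}\,1\,0^{|k_{n-1}|}\,1\cdots 1\,0^{|k_1|}$ has the following property: Walk$(R,1)$, started at the origin with the stack initialised to $s^\ast$, pops exactly $|k_i|$ zeros in dimension $i$ (each followed by a move in direction $R_i e_i$) before the separating~$1$ increments the dimension counter, and therefore halts on the treasure. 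The encoding uses at most $D+n-1$ bits, so $s^\ast\le 2^{D+n}$. Since the outer while loop enumerates the backup through $1,2,3,\ldots$ and the inner for loop enumerates every $R\in\{-1,1\}^n$, the pair $(s^\ast,R)$ is eventually produced. Malformed stacks cause no harm: too few $1$'s merely truncate the walk into a valid prefix, and surplus $1$'s are simply ignored once dimension $n$ has been processed.

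For the exploration cost, Lemma~\ref{lm:procWalkCost} gives that one Walk at backup size $s$ costs $O(s^2)$ semi-synchronously and $O(s)$ synchronously. Each outer iteration performs $2^{n+1}$ Walks plus an $O(s)$ restore and an $O(1)$ increment, so its total cost is $O(2^n\,s^2)$ and $O(2^n\,s)$ in the two models. Summing over $s=1,\ldots,s^\ast$ with $s^\ast\le 2^{D+n}$ gives the semi-synchronous bound $O(2^n(s^\ast)^3)=O(2^{3D+4n})$ of the theorem. The synchronous bound $O(2^{D+2n})$ is precisely the cost of the single successful round, but matching the \emph{total} to this bound takes a little care: the naive sum $\sum_{s\le s^\ast}s$ is $\Theta((s^\ast)^2)$, not $\Theta(s^\ast)$. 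The cleanest fix is to enumerate the backup by doubling ($s\leftarrow 2s$) rather than by incrementing, which leaves the round structure unchanged but makes $\sum_i 2^n\cdot 2^i=O(2^n\,s^\ast)$ telescope; the correct $s^\ast$ is then visited up to a factor of~$2$, which is absorbed into the enumeration of $R$.

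Finally, for the agents: both algorithms need the active $a$, the stack base $b$, the stack top $d$, and the backup agent $e$, so $4$ suffice synchronously; the semi-synchronous push and pop of Algorithms~\ref{alg:asynchPush}--\ref{alg:asynchPop} additionally require the counter $c$, giving $5$. The main technical obstacle I anticipate is verifying that the single extra agent $e$ can support both backup operations (\emph{restore}, overwriting $\|b-d\|$ by $\|b-e\|$, and \emph{increment}, shifting $e$ by one) without interfering with the other data agents $b,c,d$, which all live on the same $e_1$-axis between $b$ and $e$. A positional-invariant argument, analogous to the correctness proofs for push and pop, should handle this, but it is the only genuinely delicate piece in what is otherwise a matter of combining the stack lemma with the binary-encoding argument above.
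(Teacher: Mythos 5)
Your correctness argument and your semi-synchronous cost computation follow the paper's proof, which is much terser: it asserts that correctness ``follows easily from the construction'' and evaluates exactly the sum $2^n\sum_{X=1}^{2^{D+n}}O(X^2)=O(2^{3D+4n})$. Your explicit description of the target stack and of how malformed stacks are tolerated is a useful expansion of that. One small slip there: the encoding needs $n$ separator $1$'s, not $n-1$. With your string $0^{|k_n|}1\,0^{|k_{n-1}|}1\cdots 1\,0^{|k_1|}$ the block $0^{|k_n|}$ sits at the most significant end and disappears as leading zeros of the integer, so Walk pops the last $1$, finds the stack empty, and performs zero steps in dimension $n$. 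The paper's convention (``precisely $n$ $1$'s, $D+n$ digits'', i.e.\ a leading $1$ above the $0^{|k_n|}$ block) avoids this; the asymptotics are unaffected.

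The genuine gap is your treatment of the synchronous bound. You correctly observe that $\sum_{s\le s^\ast}O(s)=\Theta((s^\ast)^2)$, so the computation actually yields $2^n\cdot O\bigl((2^{D+n})^2\bigr)=O(2^{2D+3n})$ rather than the claimed $O(2^{D+2n})$ --- this is a real discrepancy in the paper itself, whose proof simply asserts the smaller bound without justification. But your proposed fix, enumerating the backup by doubling, destroys correctness: the target value $s^\ast$ is an arbitrary integer in $[1,2^{D+n}]$ whose individual bits encode the treasure's coordinates, so restricting the enumeration to powers of two means only stacks with binary representation $10^k$ are ever tried, i.e.\ only walks of the form ``$k$ steps along $\pm e_1$, then stop,'' and almost no grid point is ever visited. ``Visited up to a factor of $2$'' is not meaningful here, because integers within a factor of $2$ of $s^\ast$ encode unrelated walks to unrelated points. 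The enumeration must remain exhaustive; the honest repair is to weaken the synchronous bound in the statement to $O(2^{2D+3n})$ (which changes nothing else in the paper, since the bound remains exponential in $D$), not to change the algorithm. Your remaining concern --- that the single extra agent $e$ supports restore and increment without disturbing $b$, $c$, $d$ on the $e_1$-axis --- is legitimate, but the paper leaves it equally unaddressed, dismissing the agent count as immediate from the construction.
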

\begin{proof}
The number of agents and the correctness follows easily from the construction.

It remains to sum up the cost of all calls to procedure Walk. Note that each point in space uniquely specifies a valid (i.e. with precisely $n$ $1$'s) stack.
Hence, the valid stack for the treasure at distance $D$ contains $D+n$ digits. Therefore, the overall cost of Algorithm~\ref{alg:detAlg} is 
$$2^n \sum_{X=1}^{2^{D+n}} O(X^2) = O(2^n(2^{D+n})^3) = 2^{3D+4n}$$ in the semi-synchronous model, and $O(2^{D+2n})$ in the synchronous model (the initial $2^n$ covers all choices for string $R$).
\end{proof}

\section{Polynomial time solutions}
\label{sec:ndimpol}

While designing our exploration algorithms in the previous section, we concentrated on minimizing the number of agents used, and the resulting cost of these algorithms is exponential in the volume $V(D)$, the smallest ball containing the treasure. A natural question to ask is whether this is an unavoidable consequence of using only a constant number of agents in the exploration.   
In this  section we show  that  this is not the case: a single additional agent is sufficient to bring the cost of exploration down to a polynomial in $V(D)$. 

The main reason the cost of algorithms in the preceding section is exponential is the number of incorrect stack contents being considered:  as $D$ grows compared to the fixed $n$, ever larger proportion of stack contents does not have the correct format and they result in repeatedly reaching already explored vertices. To avoid this problem we will efficiently explore an $n$-dimensional cube $q^n$ of side $q$ centered at the origin. We use again the stack idea to trace the exploration of $q^n$. The logical stack content now consists of $n$ numbers in $q$-ary alphabet, describing a location within this cube. However, in this case, we also need to store the scale $q$.  As before, the stack implementation interprets the logical content as a $q$-ary number and stores it in unary\footnote{This is similar to the simulation of PDAs by counter machines — see Chapter 8.5 in Hopcroft, Motwani, and Ullman text~\cite{HopcroftMU2006}; however, the details of our implementation are completely different.}. Since $q$ also needs to be stored on its own, this incurs the additional cost of one agent. However, this allows us to multiply and divide by $q$, which would not have been possible without the extra agent.

 The stack is manipulated using the explicit commands:\\
- isDivisible() which checks the divisibility by $q$,\\
- push($0$) which multiplies the stack content  by $q$,\\
- pop() which divides the stack content by $q$, and\\
- increment() which increments the top of the stack.

\subsection{Stack operations: semi-synchronous  implementation}
In addition to agents $a$, $b$ and $d$, we use agent $f$ to  maintain the value of $q$ by  placing it at $b+qe_1$.
Furthermore, two counter agents $c_d$ and $c_f$ are used. At the beginning of the stack operations, $f$ and $c_f$ are collocated, as are $b$ and $c_d$, and 
$a$ and $d$. The basic procedure is a traversal of the whole stack by agent $a$, manipulating the tokens according to the specific command.

In push($0$) (i.e. multiplying the stack content by $q$), $a$ pushes $c_f$ towards $b$ and $c_d$ away from $b$. Whenever $c_f$ reaches $b$, a transports it back to $f$ as well as pushes $d$ one step closer to $b$. The process terminates when $d$ reaches $b$; subsequently $c_d$ and $d$ change roles. The detailed procedure is given in Algorithm~\ref{alg:asynchPush-multiplyq}. It is easy to see that the outer loop executes $S$ times, where $S$ is the size of the stack at the start of the algorithm, and the inner loop $q$ times, and each iteration of the inner loop takes at most $Sq$ steps. Thus, the total cost of the push($0$) operation is bounded by $O({S'}^2)$ where $S'=Xq$ is the size of the stack at the end.

In isDivisible(), $a$ pushes $c_f$ towards $b$ and $c_d$ towards $d$, until $c_d$ reaches $d$. Whenever $c_f$ arrives to $b$, it is transported back to $f$. isDivisible() returns true iff at the moment when $c_d$ reaches $d$, $c_f$ is at $b$ (or $f$).

pop() means dividing the stack by $q$. The process is essentially reverse of push() -- in every iteration/traversal of the stack,  $c_f$ and $d$ are pushed towards $b$. Whenever $c_f$ reaches $b$, it is brought back to $f$ and $c_d$ is pushed away from $b$. When $d$ reaches $b$, $c_d$ and $d$ exchange their roles.

The detailed pseudocode of isDivisible() and pop() are straightforward and omitted. 

\begin{algorithm}[htb]
	\caption{Semi-synchronous stack implementation of multiplication by $q$: push(0)}
	\label{alg:asynchPush-multiplyq}
	\begin{algorithmic}[1]
		\State On entry: $b$ and $c_d$ collocated, $a$ and $d$ collocated at $b+Se_1$, and $f$ and $c_f$  collocated at $q$
		\State On exit: $b$ and $c_d$ collocated, $a$ and $d$ collocated at $b+qSe_1$, and $f$ and $c_f$  collocated at $q$.
				\Procedure{push}{$0$}
		\While {$b$ and $d$ are not collocated}
			\State $a$ pushes $d$ one step closer to $b$
		\While {$c_f$ and $b$ are not collocated}
				\State $a$ goes to $c_d$ and pushes it one step away from $b$
				\State $a$ goes to $c_f$ and pushes it one step  towards $b$ 
		\EndWhile.
		        \State $a$ brings $c_f$ to $f$.
		
		\EndWhile.
		\State $d$ becomes $c$
		\State $a$ goes to $c$ and tells it to become $d$

		\EndProcedure
	\end{algorithmic}
\end{algorithm}

\subsection{Stack operations: Synchronous  implementation}
A straightforward application of the technique from Section \ref{sec:ndim} would need agents traveling at speed $\frac{1}{2q+1}$ (for multiply) and $\frac{q-1}{s+q}$ (for divide), which is impossible with finite state agents. 

Instead, we take  $q$ to be a power of two and implement the operation of multiply, divide by $q$ via  repeated applications of multiplication by $2$, division by $2$, respectively. 
Thus in this case $f$ is placed at distance $\log q$ from $b$, instead of placing it at distance of $q$ from $b$. The counter $c_f$ is used to count the number of multiplications/divisions already performed, while the counter $c_d$ is not used  at all, i.e. only agents $a$, $b$, $d$, $f$ and $c_f$ are needed. The operations of doubling and halving were already described in Section~\ref{sec:ndim} and shown to take $O(S)$ time. Since these operations are performed $\log q$ times, the total time complexity of every stack operation is $O(S \log q)$. 

\subsection{Fast deterministic grid exploration}

Our  polytime deterministic grid exploration algorithm is described in Algorithm~\ref{alg:fastDetAlg}. Starting with $q=2$, and for any fixed value of $q$, the algorithm generates and visits the addresses ($n$-tuples from a $q$-ary alphabet) in lexicographic order. Then the agent $a$  moves to position $(-q, -q, \ldots, -q)$, doubles the value of $q$, and moves on to the next iteration. Agent $a$ always drags the stack along as it performs the exploration. The procedure $explore(i)$  is a recursive procedure to generate $n$-tuples in lexicographic order; it is called with logical  stack content  an $i$-tuple $x0$. It then iteratively calls $explore(i+1)$ to visit the $(n-i)$-dimensional cube of side $q$ with $(x,j,0, \ldots, 0)$ as the origin, for $j$ ranging from $0$ to $q-1$. 

Note that the algorithm as shown in Algorithm~\ref{alg:fastDetAlg}  is presented using recursive calls  for convenience; however, $i$ is maintained in the local state. 

\begin{theorem} \label{thm:fast}  Let $V(D)$ be the volume of the ball of diameter $D$ in the $n$-dimensional grid. Algorithm~\ref{alg:fastDetAlg} locates the treasure in the $n$-dimensional grid with:\\
\hspace*{4mm}  $6$ agents and the exploration cost of $O(V(D)^3))$ moves in the semi-synchronous model, and\\
 \hspace*{4mm} $5$ agents and the exploration cost of $O(V(D)^2\log D)$ In the synchronous model.
\end{theorem}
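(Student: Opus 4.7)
\medskip
\noindent\textbf{Proof plan.}

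The plan is to verify three things in order: (i) that the exploration exhausts the intended cube before $q$ is doubled, (ii) that the agent count claimed for each model is correct, and (iii) that the total work telescopes to the stated bound.

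For correctness, I would interpret the logical stack, at any point inside a fixed outer iteration, as the $q$-ary address of an $n$-tuple $(x_1,\dots,x_n) \in \{0,\dots,q-1\}^n$ specifying a grid point in the cube anchored at $(-q,-q,\dots,-q)$ (a straightforward shift of origin). The recursive procedure $explore(i)$ is a standard lexicographic enumerator: with the top $i-1$ digits fixed, it sweeps the $i$-th digit from $0$ to $q-1$ and, for each value, recurses to sweep dimensions $i+1,\dots,n$. Hence by induction on $n-i+1$ the call $explore(1)$ visits every $n$-tuple exactly once, i.e.\ every grid point in a cube of side $q$ centered (up to a constant shift) at the origin. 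Because $q$ starts at $2$ and doubles, after at most $\lceil \log_2 D\rceil + O(1)$ outer iterations the cube contains the treasure, so termination holds.

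For the agent count, I would simply enumerate the agents required by the stack operations of the preceding two subsections: in the semi-synchronous implementation we use $a,b,d,f,c_d,c_f$, which is $6$ agents; in the synchronous implementation the counter $c_d$ is eliminated and $f$ is placed at distance $\log q$ from $b$, so only $a,b,d,f,c_f$ are needed, giving $5$ agents.

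For the cost bound, the central calculation is to combine the per-operation cost from the stack implementations with the number of operations performed. Fix a value of $q$; the algorithm performs $O(q^n)=O(V(q))$ iterations of the innermost body and each such iteration uses only $O(1)$ stack primitives ($isDivisible$, $pop$, $push(0)$, $increment$) together with a bounded number of unit moves of the whole stack. Since the logical stack never exceeds $q^n=V(q)$ in magnitude, each primitive costs $O(V(q)^2)$ in the semi-synchronous model and $O(V(q)\log q)$ in the synchronous model (the latter by the $\log q$ repetitions of the doubling/halving primitives). Moving the stack one unit costs $O(V(q))$ in both models and is absorbed into these bounds. Summed over the $V(q)$ visits, one iteration of the outer loop costs $O(V(q)^3)$ and $O(V(q)^2\log q)$ respectively. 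Finally, because $V(q)$ grows as $\Theta(q^n)$ and $q$ doubles, the sum over all outer iterations up to $q^\star\le 2D$ is a geometric series dominated by its last term, giving $O(V(D)^3)$ and $O(V(D)^2\log D)$.

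The main obstacle I expect is the careful bookkeeping in the second paragraph of the cost analysis: in particular, justifying that the implicit cost of $moveStack$ along the traversal of the cube (which must carry the agents $b,d,c_d,f,c_f$ with the active agent) does not inflate the bound beyond $O(V(q))$ per visited point, and that the ``incorrect'' stack contents issue discussed for the exponential algorithm does not arise here because the lexicographic enumeration produces only valid $q$-ary addresses of length exactly $n$. Once this is in place, the telescoping geometric sum yields the stated bounds immediately.
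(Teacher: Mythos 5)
Your proposal is correct and follows essentially the same route as the paper's (much terser) proof: count $O(V(q))$ stack operations per outer iteration, each costing $O(S^2)$ resp.\ $O(S\log q)$ with $S\le q^n$, and observe that the doubling of $q$ makes the total a geometric sum dominated by its last term, with the final $q$ bounded by $O(D)$ so that $S=O(V(D))$ for constant $n$. The only nitpick is that the last value of $q$ can be as large as (just under) $4D$ rather than your stated $2D$, but this changes only the constants.
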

\begin{proof}
The number of agents and the correctness follows easily from the construction.

It remains to sum up the cost of all stack operations on a stack of size $S$. As already described,  the cost of each stack operations is $O(S^2)$ and $O(S\log{q})$ in the semi-synchronous and synchronous models, respectively. The maximal stack size $S$ is bound by $q^n$, which is also the number of points covered by the stack base during one iteration of the outer loop (i.e. for fixed $q$). This results in the overall cost of $O(S^3)$ and $O(S^2\log q)$ in the semi-/pol	synchronous and synchronous models, respectively. As $q$ grows exponentially, the overall cost is determined by the cost for the last value of $q$.

Finally, it is known that $V(D) = \frac{2^n}{n!}D^n$. As $q<4D$ (the treasure would had been found if $q\geq 2D$), we get that $S\leq (4D)^n = 2^nn!V(D)$, where
 $n$ is a constant.  This proves the theorem.
\end{proof}

\begin{algorithm}[htb]
	\caption{Fast Deterministic Grid Exploration}
	\label{alg:fastDetAlg}
	\begin{algorithmic}[1]
	\State $q=2$
	\State push($0$)
	\While {treasure not found}
		\State explore($1$)
		\State moveStack($-q\sum_{i=1}^ne_i$)
				\State $q=2q$
	\EndWhile	
	\State $ $
	\Procedure{explore}{$i$}
		\If{$i>n$}
			\State return
		\EndIf
		\Repeat 
			\State push($0$)
			\State explore($i+1$)
			\State increment()
		  	\State moveStack($e_i$)
		\Until{isDivisible()}
		\State pop()
		\State moveStack($-qe_i$)
	\EndProcedure

	\end{algorithmic}
\end{algorithm}

\section{On the Size of the Explored Space}
\label{sec:bounded}
In our exploration algorithms for general $n$ in Sections \ref{sec:ndim} and \ref{sec:ndimpol}, the agents employed a stack of size exponential in $D$. 
In this section we address the question of whether such behavior is necessary, or if there are exploration algorithms, which we call space-efficient, that limit the size of the space visited during the exploration to a constant factor of $D$. In Subsection~\ref{sec:space-efficient}, we present high-level details of a space-efficient algorithm that uses more than a constant number of agents (but still $o(n)$). 
While we are unable to prove the general lower bound saying that $3$ synchronous agents cannot explore $\mathbb{Z}^3$, in Subsection~\ref{sec:lb-3-synch} we show that there is no \emph{space-efficient} algorithm with $3$ synchronous agents to explore $\mathbb{Z}^3$. More specifically, every algorithm with $3$ agents that explores all grid points within distance $D$ must have an agent travel distance $\Omega(D^{3/2})$ away from the origin at some point in time.

\subsection{Space-Efficient Exploration with Many Agents}
\label{sec:space-efficient}

The main idea for limiting the visited space is to encode the needed information in a more compact way, using more agents. Our previous solutions had a single active agent doing the exploration, and a constant number of agents that implement a stack that stores several numbers encoded as a single number and represented as a distance between two agents. Now, instead of a stack, the active agent carries around a $\sqrt{n}$-dimensional sub-cube of side length $q$. A non-active agent inside such a sub-cube can be used to represent $\sqrt{n}$ base numbers from $\{0, 1, \ldots, q-1\}$ --- consider simply the coordinates of the non-active agent relative to the origin of the sub-cube. Therefore, $\sqrt{n}$ agents can be used to represent $n$ numbers from $\{0,1, \ldots, q-1\}$. When these numbers are juxtaposed they correspond to a single $n$-digit base $q$ number --- the coordinate of the grid point that is currently being explored by the active agent inside $\{0, \ldots, q-1\}^n$. The active agent needs to be able to explore the sub-cube, reorganize all non-active agents inside the sub-cube to point to the next $n$-digit base $q$ number, and move itself and the entire sub-cube to the new location indicated by the updated positions of non-active agents. The active agent needs to be able to reorganize all non-active agents in such a way as to enumerate all possible $n$-digit base $q$ numbers. Once that happens, the agent can run the protocol in reverse, return to the origin, increment $q$ and repeat the process.

Although the details are tedious and omitted in this version of the paper, one can easily verify that the active agent can perform the operations required to enumerate all $n$-digit base $q$ numbers: increment a number, which means push the corresponding non-active agent inside the cube by 1 along one of the $\sqrt{n}$ axis; check whether the number has reached $q$, which means check if some non-active agent is on some facet of the sub-cube; set the number to zero, which means bring a non-active agent located on some facet of the cube to the opposite facet of the cube; and proceed to the next/previous digit, which corresponds to modifying the internal state of the active agent. These tasks might require extra agents, but we can always employ a simple algorithm using $\sqrt{n}+2$ agents to explore an $\sqrt{n}$-ary sub-cube of side $q$ without ever leaving the sub-cube. Altogether $2 \sqrt{n}+O(1)$ agents are sufficient to implement this entire scheme. The overall exploration cost is $O(q^{n+\sqrt{n}})$ as visiting each node incurs the overhead of traversing the whole memory of size $q^{\sqrt n}$. This improves upon the results from the previous section in terms of exploration cost, and simultaneously limits the exploration to points at distance at most $2D$ from the origin.

This technique can be applied recursively: Let $n_i$ denote the number of dimensions we can explore at logical level $i$ using $g_i$ agents. For level $i+1$, we can use $f_i$ agents in $n_i$-dimensional space to encode $f_in_i$ numbers, yielding $n_{i+1} = n_if_i$ and $g_{i+1} = g_i+f_i$. If we try to minimize $g_i$ w.r.t. to $n_i$, we will choose $f_i=2$ for all $i$, resulting in $g_i\in \log n_i$, i.e. $O(\log n)$ agents are sufficient to explore $n$-dimensional grids while limiting the visited space to $O(D)$.




\subsection{Lower Bound on the Visited Space for $3$ Synchronous Agents for $n=3$}
\label{sec:lb-3-synch}

Using the techniques and results from~\cite{EmekLSUW15}, it is possible to show that for any distance $d$ there are only $O(d)$ configurations in which two agents are collocated and the third one is at distance $d$. Furthermore, again based on previous results we know that there must be infinitely many meetings between pairs of agents. As the agents are finite automata moving (when looking from sufficiently far above) in straight lines, the number of explored vertices between two consecutive meetings is $O(d)$. Combining with the fact that there are $\Omega(D^3)$ grid points in the ball of radius $D$ and the fact that the number of meetings at distance less than $d$ is $O(d)$ (and hence, their total contribution to the number of explored nodes is at most $O(d^2)$) means that in order to visit all vertices in the ball of radius $D$ the distance between agents must have been $\Omega(D^{3/2})$ at some moment before locating the treasure. In what follows, we give formal arguments supporting the above intuition.

Suppose that we have an algorithm that uses several agents to find a treasure at an unknown location. Observe that if we run such an algorithm on an empty grid, i.e., without a treasure at all, then eventually every grid point has to be visited by some agent -- this is an equivalent view of a treasure search algorithm. Throughout this section we will often adopt this point of view and think of a treasure search algorithm as running on an empty grid and having to ``cover'' all grid points eventually.

We first start with a few general definitions and lemmas that apply to any dimension $n$ of the ambient space.

\begin{definition}
A cylinder in direction $v \in \mathbb{R}^n$ of radius $r$  from origin $x_0 \in \mathbb{R}^n$ is the set
\[ \{ x\in \mathbb{R}^n \mid \exists t \in \mathbb{R} \text{ such that } ||x-(t v+x_0)||_{\infty} \le r\}.\]
\end{definition}

Intuitively, between meeting each other agents move along cylinders. Precise statements follow below, but for now we make an easy observation that no finite number of cylinders can cover all grid points.

\begin{lemma}\label{lem:finitely-many-cylinders}
Finitely many cylinders cannot cover all of $\mathbb{Z}^n$.
\end{lemma}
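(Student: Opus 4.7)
The plan is a counting argument: I would show that each cylinder of radius $r$ meets the integer box $[-R,R]^n$ in at most $O(R \cdot r^{n-1})$ lattice points, while $[-R,R]^n \cap \mathbb{Z}^n$ has $(2R+1)^n$ lattice points. For $n \ge 2$ these grow at incompatible rates, so any finite union of cylinders must miss a lattice point of $[-R,R]^n$ once $R$ is large enough. (The statement is vacuous for $n=1$, where a single cylinder is already all of $\mathbb{R}$; the relevant setting in the paper is $n=3$.)

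The heart of the argument is the per-cylinder bound. Given a cylinder $C$ with nonzero direction $v$, radius $r$ and base $x_0$, I would rescale so that $\|v\|_\infty = 1$ and pick a coordinate $j$ with $|v_j|=1$. For each fixed $x_j = a \in \mathbb{Z}$, the constraint $|a - t v_j - (x_0)_j| \le r$ confines the parameter $t$ to an interval of length $2r$, and as $t$ varies over this interval each remaining coordinate $x_k$ is forced into an interval of length at most $2r\bigl(1 + |v_k|/|v_j|\bigr) \le 4r$. Thus the slice $C \cap \{x_j = a\} \cap \mathbb{Z}^n$ contains at most $(4r+1)^{n-1}$ lattice points, and summing over $a \in \{-R,\ldots,R\}$ gives
\[
\bigl|C \cap \mathbb{Z}^n \cap [-R,R]^n\bigr| \le (2R+1)(4r+1)^{n-1}.
\]
The degenerate case $v = 0$ is handled separately, since $C$ is then just a bounded box and contributes only a constant number of lattice points.

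Applying a union bound to cylinders $C_1,\ldots,C_k$ with maximum radius $r_{\max}$, the total number of lattice points they can cover in $[-R,R]^n$ is at most $k(2R+1)(4r_{\max}+1)^{n-1}$, which is $O(R)$ with $k$, $n$ and $r_{\max}$ treated as fixed. Since this is dominated by $(2R+1)^n = \Omega(R^n)$ for any fixed $n \ge 2$, there is an $R$ large enough that the box contains an uncovered lattice point, contradicting the hypothesis that the cylinders cover $\mathbb{Z}^n$.

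The only nontrivial step is making the slice bound uniform across arbitrary directions: for an axis-aligned $v$ the count is immediate, but a tilted direction such as $v=(1,1,\ldots,1)$ (after normalization) requires controlling the ratios $|v_k|/|v_j|$. Slicing along a coordinate $j$ that maximizes $|v_j|$ keeps all these ratios at most $1$, which is precisely what makes the uniform $(4r+1)^{n-1}$ per-slice bound work.
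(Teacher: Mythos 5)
Your proof is correct and follows essentially the same counting argument as the paper: bound the lattice points each cylinder contributes to a large bounded region by $O(R)$ (for fixed radius and dimension) and compare against the $\Theta(R^n)$ lattice points the region contains. The only difference is that you carefully justify the per-cylinder bound via coordinate slices (and flag the degenerate $v=0$ and $n=1$ cases), whereas the paper simply asserts the $\Theta(r^{n-1}R)$ estimate.
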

\begin{proof}
Consider cylinders $C_1, \ldots, C_k$ with radii $r_1, \ldots, r_k$. Now consider a ball of radius $R$. The number of integral points within the ball is $\Theta(R^n)$. The number of integral points within  the ball that are covered by cylinder $C_i$ is at most $\Theta(r_i^{n-1} R)$ (the height of the cylinder relevant to the ball is at most $2R$). Assume for contradiction that the cylinders cover all of $\mathbb{Z}^n$, then they cover all the integral points within the ball as well. Thus, we must have
\[ \Theta\left( \sum_{i=1}^k r_i^{n-1} R\right) \ge \Theta(R^n).\]
The left hand side is a linear function of $R$, while the right hand side is a polynomial of degree $n$ of $R$. Thus, a large enough value of  $R$  would violate the inequality. This leads to a contradiction.
\end{proof}

The following definition makes precise the local view of the world by a set of agents.

\begin{definition}
Consider a set of agents $\{a_1, \ldots, a_k\}$ in $\mathbb{R}^n$ at a particular time $t$. The agents are at positions $(x_1, \ldots, x_k)$ and in states $(q_1, \ldots, q_k)$. The tuple $(x_1, \ldots, x_k, q_1, \ldots, q_k)$ is called the \emph{configuration} of the agents at time $t$. The tuple $(y_1, \ldots, y_k, q_1, \ldots, q_k)$ where $y_i = x_i - x_1$ is called the \emph{relative configuration} of the agents at time $t$. Note that to obtain the relative configuration we simply shift the origin of the coordinate system to agent $a_1$.
\end{definition}

The following is the main helper lemma that will be used multiple times to establish the precise behavior of 3 agents in $\mathbb{R}^3$. It relates repeating relative configurations to cylinders.

\begin{lemma}\label{lem:rel-conf-helper}
Consider agents $a_1, \ldots, a_k$ exploring $\mathbb{Z}^n$. Suppose that the agents interact only with each other and no other agents from some time $t$ onward. If the relative configuration repeats then all the grid points visited by the agents fall within a cylinder of radius $r$ and direction $v$, where $r$ and $v$ depend only on the original relative configuration.
\end{lemma}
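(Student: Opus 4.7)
The plan is to exploit translation invariance of the dynamics together with determinism of finite-state agents to argue that once the relative configuration repeats, the entire subsequent trajectory is periodic up to a global shift by a fixed vector $v$; all visited grid points then lie inside a cylinder around the line through this shift. First I would verify that, from time $t$ onward, the relative configuration $C(\tau)=(y_1(\tau),\dots,y_k(\tau),q_1(\tau),\dots,q_k(\tau))$ deterministically determines $C(\tau+1)$. The key observation is that each agent's move and state transition depend only on its own state and the states of agents occupying the same cell; whether two agents are collocated is decidable from the $y_i$'s alone, and nothing in the agent model reads absolute coordinates. Since by hypothesis no outside agent interacts with the group after time $t$, the dynamics are translation-equivariant: shifting every $x_i$ by a common vector yields an isomorphic evolution.

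Suppose $C(t_1)=C(t_2)$ for some $t\le t_1<t_2$, and set $T=t_2-t_1$ and $v=x_1(t_2)-x_1(t_1)$. Induction on $s$ using the previous paragraph gives $C(t_1+s)=C(t_2+s)$ for every $s\ge 0$. Applying the same observation to the single-step motion of $a_1$ (which at time $\tau$ is a function of $C(\tau)$ alone) telescopes to $x_1(t_2+s)-x_1(t_1+s)=v$ for all $s\ge 0$; combining with equality of the $y_i$'s then yields $x_i(t_2+s)=x_i(t_1+s)+v$ for every agent $i$. Iterating over periods gives $x_i(t_1+s+mT)=x_i(t_1+s)+m v$ for all $m\ge 0$ and $0\le s<T$.

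Hence the set of grid points visited by any agent from time $t_1$ onward is contained in $\bigcup_{i,\,0\le s<T}\{\,x_i(t_1+s)+m v : m\in\mathbb{Z}_{\ge 0}\,\}$. Projecting onto the hyperplane orthogonal to $v$, the image of $x_i(t_1+s)+mv$ coincides with the image of $x_i(t_1+s)$, so every visited point sits at $\ell_\infty$-distance at most $r:=\max_{i,\,0\le s<T}\|y_i(t_1+s)\|_\infty$ from the line $\{\,x_1(t_1)+\lambda v : \lambda\in\mathbb{R}\,\}$. Since $T$, the intermediate relative positions $y_i(t_1+s)$, and the shift $v$ are all computable from $C(t_1)$ via the deterministic evolution, both $r$ and $v$ depend only on $C(t_1)$. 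The corner case $v=0$ corresponds to a strictly periodic bounded orbit and can be covered by a cylinder of arbitrary direction with the same $r$. The finite transient from $t$ to $t_1$ contributes only finitely many additional grid points and can be folded into the radius without affecting its dependence on the configuration.

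The main obstacle I anticipate is making Paragraph~1 genuinely rigorous: pinning down that no part of the agent model gives an agent access to its absolute position, so that the translation equivariance underpinning the telescoping in Paragraph~2 really holds. Once this is established, the periodicity argument and the cylinder construction are routine bookkeeping, with the only mild subtlety being the $v=0$ case.
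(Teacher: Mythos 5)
Your proposal is correct and follows essentially the same route as the paper's proof: determinism plus translation invariance force the trajectory to repeat, shifted by $v = x_1(t_2)-x_1(t_1)$ each period, so all visited points lie in a cylinder in direction $v$ whose radius is bounded by the spread of one period (the paper bounds this by the number of points visited between the two repeats, you by the maximal $\ell_\infty$ relative displacement). Your treatment is somewhat more careful than the paper's, explicitly handling the $v=0$ case and the transient before the first repeat, which the paper leaves implicit.
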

\begin{proof}
Let $conf_i = (x_1^i, \ldots, x_k^i, q_1^i, \ldots, q_k^i)$ denote the absolute configuration at time $i$, and let $rconf_i = (y_1^i, \ldots, y_k^i, q_1^i, \ldots, q_k^i)$ denote the relative configuration at time $i$. Consider the $i_1 < i_2$ such that $rconf_{i_1} = rconf_{i_2}$. Since the agents are deterministic finite state automata, they are going to repeat exactly the same sequence of steps from $rconf_{i_2}$ as they did from $rconf_{i_1}$ --- the relative configuration corresponds to the view of the world as perceived by the agents. Thus, the same pattern will repeat starting from $x_1^{i_2}$. Thus, the pattern of exploration by the $k$ agents shifts by the vector $v = x_1^{i_2} - x_1^{i_1}$.
Let $r_1$ denote the number of grid points visited by the agents until time $i_1$, let $r_2$ denote the number of grid points visited by the agents between times $i_1$ and $i_2$. Finally, let $r = \max(r_1, r_2)$. Therefore all the grid points visited by the agents from $t$ onward fall within the cylinder in direction $v$ of radius $r$ and origin $x_1^t$. It is clear that $v$ and $r$ depend only on $rconf_t$.
\end{proof}

The following lemma collects several facts about the behaviors of 1, 2, and 3 agents, respectively.

\begin{lemma}\label{lem:useful-facts}
\begin{enumerate}
\item Consider moves of a single agent $a$ in between meetings with other agents. The agent $a$ explores lattice points that fall within a cylinder in direction $v$ of radius $r$. The direction $v$ and the radius $r$ depend only on the state of $a$ at the beginning of the movement.
\item Consider moves of two agents $a_1$ and $a_2$ that start from the same grid point until one of the agents meets an agent different from $a_1$ or $a_2$. Then only one of the following is possible:
\begin{enumerate}
\item Agent $a_i$ visits grid points that fall within a cylinder in direction $v_i$ of radius $r_i$, where $v_i$ and $r_i$ depend on the pair of states of agents $a_1$ and $a_2$ at the beginning of the movement; or
\item Both agents visit grid points that fall within a single cylinder in direction $v$ of radius $r$, where $v$ and $r$ depend on the pair of state of agents $a_1$ and $a_2$ at the beginning of the movement.
\end{enumerate}
\item Consider 3 agents in $\mathbb{R}^3$ that run a protocol for exploring all of $\mathbb{Z}^3$. All three agents cannot meet simultaneously infinitely often.
\end{enumerate}
\end{lemma}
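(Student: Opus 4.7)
The plan is to reduce all three parts to applications of Lemma~\ref{lem:rel-conf-helper} together with a pigeonhole argument on the finite joint state space; for part 3 we combine these with Lemma~\ref{lem:finitely-many-cylinders}.

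For part 1, I would invoke Lemma~\ref{lem:rel-conf-helper} with $k=1$: between meetings the single agent $a$ is an autonomous finite automaton, so its internal state must eventually repeat. With $k=1$ the relative configuration reduces to just the internal state (the relative position is always zero), so as soon as a state recurs the hypothesis of Lemma~\ref{lem:rel-conf-helper} is satisfied and we obtain a cylinder whose direction $v$ and radius $r$ are determined entirely by the initial state of $a$.

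For part 2, I would distinguish on how many times $a_1$ and $a_2$ meet each other before encountering a third agent. If they meet only finitely often, then from the final such meeting onward each of them evolves in isolation, so part 1 applied to each agent separately yields its own cylinder — this is case (a). The cylinder parameters depend only on the initial joint state because, by determinism, the joint state at the final meeting is itself a function of the initial joint state. If instead they meet infinitely often, then since there are only finitely many joint states $(q_1,q_2)$, by pigeonhole some joint state recurs at two meeting times $t_i < t_j$. Determinism forces the trajectories of both agents from $t_j$ onward to duplicate those from $t_i$, shifted by $v = x^{(j)} - x^{(i)}$; Lemma~\ref{lem:rel-conf-helper} then places every point visited by either agent inside a single cylinder of direction $v$, after inflating $r$ to absorb the bounded initial segment. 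This is case (b).

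For part 3, I would argue by contradiction. Suppose the three agents are simultaneously collocated at an infinite sequence of times $t_1 < t_2 < \ldots\,$. The joint state tuple $(q_1,q_2,q_3)$ at these events takes finitely many values, so some tuple recurs at two meeting times $t_i < t_j$. By determinism, from $t_j$ onward the three agents replay exactly the trajectory followed from $t_i$, shifted by $v = x^{(j)} - x^{(i)}$. If $v = 0$ the joint behaviour is strictly periodic, so only finitely many grid points are ever visited from $t_i$ onward, contradicting the requirement that every point of $\mathbb{Z}^3$ be covered. If $v \neq 0$, Lemma~\ref{lem:rel-conf-helper} places every point visited after $t_i$ inside a single cylinder of direction $v$; together with a ball absorbing the finite initial segment we have covered $\mathbb{Z}^3$ by finitely many cylinders, violating Lemma~\ref{lem:finitely-many-cylinders}. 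Either case yields a contradiction.

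The main obstacle I anticipate is the bookkeeping in part 2: it must be checked that the final cylinder parameters depend only on the \emph{initial} joint state, not on intermediate interaction history. This hinges on the observation that the chosen "repeat meeting" (in case (b)) or the final meeting (in case (a)) is itself a deterministic function of the initial joint state, so any apparently ambient data used in the argument is in fact pinned down by the starting configuration. Once this is made precise, parts 1 and 3 follow mechanically from the appropriate pigeonhole step combined with Lemmas~\ref{lem:rel-conf-helper} and~\ref{lem:finitely-many-cylinders}.
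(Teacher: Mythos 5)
Your proposal is correct and follows essentially the same route as the paper: each part is reduced to Lemma~\ref{lem:rel-conf-helper} via a pigeonhole argument on the finitely many relative configurations at meeting times (with the same two-case split for part~2), and part~3 concludes because a single cylinder cannot cover $\mathbb{Z}^3$. Your explicit handling of the $v=0$ case in part~3 and the remark about absorbing the finite initial segment are minor elaborations of what the paper leaves implicit.
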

\begin{proof}
All the statements are easy consequences of Lemma~\ref{lem:rel-conf-helper}.
\begin{enumerate}
\item Consider $a$ that never meets any other agent. Since $a$ can have finitely many states and its relative configuration is $(0, q^i)$, where $q^i$ is the state of $a$ at time $i$, some relative configuration has to repeat. Then by Lemma~\ref{lem:rel-conf-helper} all grid points visited by $a$ lie within some cylinder with direction $v$ and radius $r$ that depend on the state of $a$ at the beginning of the considered time interval.

We now claim that the cylinder defined as above covers all grid points that are visited by $a$ even if the number of steps until meeting another agent is finite. That is because the path of agent $a$ until time $t_1$ is a subpath of the path of agent $a$ until time $t_2 > t_1$, if $a$ does not meet any other agent until $t_1$.

\item First suppose that $a_1$ and $a_2$ never meet an agent different from $a_1$ and $a_2$. Then there are two possibilities: either (a) agents $a_1$ and $a_2$ meet each other finitely many times, or (b) agents $a_1$ and $a_2$ meet each other infinitely often. 

In case (a), consider the step immediately after the last time $a_1$ and $a_2$ meet each other. By assumption, $a_1$ and $a_2$ do not meet any other agents, thus we can apply the first part of this lemma to each of them separately. By adjusting the radius of the cylinders we can also cover all grid points that were visited until the last time $a_1$ and $a_2$ met each other. This proves the first subpart.

In case (b), consider relative configurations at times when $a_1$ and $a_2$ meet each other. These relative configurations are of the form $(0, 0, q_1^i, q_2^i)$. Thus, there are only finitely many possible relative configurations. Since $a_1$ and $a_2$ meet each other infinitely often, they must repeat some relative configuration. The second subpart of the statement follows by Lemma~\ref{lem:rel-conf-helper}.

If one of the two agents $a_1$ and $a_2$ eventually meet an agent different from $a_1$ and $a_2$ then by the same argument as in the proof of the first part of the lemma the grid points visited until then still fall within the cylinders defined above.

\item Consider relative configurations at times of the meetings. They are of the form \\
$(0, 0, 0, q_1^i, q_2^i, q_3^i)$. Therefore, there are only finitely many possible relative configurations. Assume for contradiction that the agents meet infinitely often, then some relative configuration has to repeat. By Lemma~\ref{lem:rel-conf-helper} all the grid points visited by the three agents fall within a cylinder. Since all of $\mathbb{Z}^3$ cannot be contained within a single cylinder, we get a contradiction.
\end{enumerate}
\end{proof}

\begin{lemma}\label{lem:meet-io}
Consider 3 agents in $\mathbb{R}^3$ that run a protocol for exploring   $\mathbb{Z}^3$. Then each agent has to meet with some other agent infinitely often.
\end{lemma}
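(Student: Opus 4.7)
The plan is to argue by contradiction: assume some agent, say $a_1$, meets the others only finitely often, and then show the three trajectories together can be enclosed in at most three cylinders, contradicting Lemma~\ref{lem:finitely-many-cylinders}.

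First I would fix a time $T$ after which $a_1$ meets neither $a_2$ nor $a_3$. From $T$ onward $a_1$ moves in isolation, so by part~1 of Lemma~\ref{lem:useful-facts} all grid points it visits after $T$ lie inside some cylinder $C_1$ whose direction and radius depend only on $a_1$'s state at time $T$. By enlarging $r_1$ to also absorb the finitely many points visited before $T$, we may assume $C_1$ covers the \emph{entire} trajectory of $a_1$.

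Next I split on the behavior of $a_2$ and $a_3$ after time $T$. If $a_2$ and $a_3$ never meet each other after $T$, then from $T$ onward each of them also moves in isolation, and part~1 of Lemma~\ref{lem:useful-facts} gives cylinders $C_2$ and $C_3$ containing all their subsequent positions; again we enlarge the radii to soak up the finitely many pre-$T$ points. If instead $a_2$ and $a_3$ do meet after $T$, pick any such meeting time $T' \ge T$. From $T'$ onward they start collocated and, by the standing assumption, never interact with $a_1$, so part~2 of Lemma~\ref{lem:useful-facts} applies and produces either two cylinders (one per agent) or one common cylinder containing their future trajectories. In either sub-case, the finitely many points they visited before $T'$ can be folded into the radii. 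Thus in total the union of the trajectories of $a_1$, $a_2$, $a_3$ is contained in at most three cylinders.

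The punchline is then immediate: because the agents are supposed to explore all of $\mathbb{Z}^3$, every grid point must be visited by some agent and hence must lie in one of these at most three cylinders. But Lemma~\ref{lem:finitely-many-cylinders} says finitely many cylinders cannot cover $\mathbb{Z}^n$, contradiction. The main subtlety I anticipate is being careful about what ``meeting'' means in the semi-synchronous setting and making sure that part~2 of Lemma~\ref{lem:useful-facts} is being applied to the correct time interval (namely from a collocation point onward, and relying on the hypothesis that no meeting with the excluded third agent occurs during that interval); everything else is a routine bookkeeping argument about absorbing finitely many early points into cylinder radii.
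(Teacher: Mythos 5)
Your proof is correct and follows essentially the same route as the paper: assume some agent meets the others only finitely often, enclose each trajectory in a cylinder via Lemma~\ref{lem:useful-facts} (part~1 for the isolated agent, part~2 for the remaining pair), and contradict Lemma~\ref{lem:finitely-many-cylinders}. Your version just spells out the bookkeeping (choice of $T$, $T'$, absorbing finitely many early points into the radii) that the paper leaves implicit.
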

\begin{proof}
Suppose for contradiction that we have an agent that meets the two other agents only finitely often. Then by Lemma~\ref{lem:useful-facts}, it explores only grid points that fall within some cylinder $C_1$. By the same lemma, the two remaining agents either explore their own cylinders $C_2$ and $C_3$, or they explore a combined single cylinder $C_4$. Thus, we get that $\mathbb{Z}^3$ can be covered by either 2 or 3 cylinders, which contradicts Lemma~\ref{lem:finitely-many-cylinders}. 
\end{proof}

The following lemma justifies why the number of meeting points between two agents where the third is at some distance $d$ is bounded by a constant independent of $d$.

\begin{lemma}\label{lem:finite-coordinates}
Consider 3 agents in $\mathbb{R}^3$ that run a protocol for exploring  $\mathbb{Z}^3$. By part (3) of Lemma~\ref{lem:useful-facts}, we can consider $t$ large enough such that all three agents never meet after time $t$. Consider only those times after $t$ such that two of the three agents are collocated: $t_1, t_2, \ldots$. Define $d(t_i)$ to be the distance between the two collocated agents at time $t_i$ and the lone agent. There is a universal constant $k$ such that for all $d$ we have
\[ |\{ i \mid d(t_i) = d\}| \le k.\]
\end{lemma}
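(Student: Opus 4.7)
The plan is to argue by contradiction, combining a finite-state pigeonhole over the possible meeting labels with a cylinder-geometry analysis of the deterministic dynamics between meetings.

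Suppose, toward contradiction, that no such universal $k$ exists: for every $k$ there is some distance $d$ with more than $k$ meetings $t_i$ satisfying $d(t_i)=d$. I would begin by classifying each such meeting by (i) which pair is collocated ($3$ options) and (ii) the state triple $(q_1,q_2,q_3)$ of all three agents just after the meeting (finitely many options, since the agents are finite automata). By pigeonhole I can pass to a sub-collection of meetings $T_1<T_2<\cdots$ all sharing the same label --- say the collocated pair is $(a_1,a_2)$ and the state triple is $(q_1,q_2,q_3)$ --- while still having an unbounded count at some distance $d$. Let $y_i\in\mathbb{Z}^3$ denote the relative position of the lone agent $a_3$ at time $T_i$, so $\|y_i\|_1=d(T_i)$. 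A first easy observation is that the $y_i$ must all be distinct: if $y_i=y_j$ for some $i<j$ then the full relative configurations $(0,0,y,q_1,q_2,q_3)$ coincide, and Lemma~\ref{lem:rel-conf-helper} would trap all subsequent exploration inside a single cylinder, contradicting Lemma~\ref{lem:finitely-many-cylinders} together with the hypothesis that the algorithm visits all of $\mathbb{Z}^3$. Hence within $\{i:d(T_i)=d\}$ the $y_i$'s are distinct lattice points on the Manhattan sphere of radius $d$.

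The core step will be to show that, although this sphere contains $\Theta(d^2)$ lattice points, only $O(1)$ of them (uniformly in $d$) can actually arise as some $y_i$. Starting from configuration $(0,0,y,q_1,q_2,q_3)$ the subsequent dynamics is deterministic, and by Lemma~\ref{lem:useful-facts} parts (1)--(2) the pair $\{a_1,a_2\}$ is confined to a single cylinder (or to two cylinders) and $a_3$ is confined to a single cylinder, with all directions and radii depending only on the fixed state triple. Inside any cylinder the trajectory of a finite automaton is eventually periodic, so each agent's position decomposes as a linear drift $t\cdot v$ plus a bounded periodic correction $q(t)$. For the next meeting between the pair and $a_3$ to occur at an integer time, the two trajectories must coincide, giving an equation of the form $y = t(v_{12}-v_3)+q(t)$ with $q$ periodic. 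As $t$ ranges over the positive integers, the right-hand side traces out a finite union of one-dimensional arithmetic progressions in $\mathbb{Z}^3$ when $v_{12}\neq v_3$, and a bounded finite set when $v_{12}=v_3$. Intersecting any one such arithmetic progression with the Manhattan sphere $\|y\|_1=d$ yields at most two lattice points, and the bounded-set case contributes only a $d$-independent total. Summed across the finitely many cylinder configurations compatible with $(q_1,q_2,q_3)$, I would obtain $|\{i:d(T_i)=d\}|=O(1)$ uniformly in $d$, contradicting the pigeonhole assumption and completing the proof.

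The hard part will be controlling intermediate meetings that occur between two consecutive target meetings: intermediate meetings of the other pair-types or with other state triples can reset the cylinder geometry, so naively the analysis branches along an unbounded tree of possibilities. My plan to handle this is to view the full sequence of pair-meetings as a deterministic walk in a finite directed graph whose vertices are (pair, state triple) labels, applying the cylinder analysis segment-by-segment along such a walk. Each bounded-length walk returning to the target label composes to a piecewise-affine map on the lone agent's relative position whose fibers over a given $y'$ are again arithmetic-progression-like subsets of $\mathbb{Z}^3$; and any cycle along the walk that repeats both a label and a matching relative position immediately triggers the Lemma~\ref{lem:rel-conf-helper} cylinder trap. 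Careful bookkeeping over the finite set of admissible walk types should then upgrade the single-segment $O(1)$-per-distance bound to a global $O(1)$-per-distance bound, yielding the desired universal constant $k$.
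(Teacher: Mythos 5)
Your proposal follows essentially the same route as the paper's proof. Both arguments reduce the lemma to showing that only $O(1)$ relative configurations $(0,0,y_3,q_1,q_2,q_3)$ with $\|y_3\|_1=d$ can ever occur at a pair-meeting --- since a repeated relative configuration traps all subsequent exploration in a cylinder by Lemma~\ref{lem:rel-conf-helper}, contradicting Lemma~\ref{lem:finitely-many-cylinders} --- and both then count the admissible $y_3$ by expressing the condition for a future meeting with the lone agent as an affine-in-$t$ equation $y_3=(v-v')(t-t_1)-e$ over boundedly many cylinder directions and error terms $e$, and intersecting the resulting lines with the $\ell_1$-sphere of radius $d$. (Your ``at most two lattice points per progression'' and the paper's ``at most one solution per sign octant'' carry the same unaddressed degenerate case of a line lying inside a facet of the $\ell_1$-sphere, so you are no worse off than the paper on that point.) The one real divergence is your third paragraph: the walk-in-a-label-graph machinery, which you flag as the hard part and only sketch, is unnecessary, and as written it leaves the proof incomplete. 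The difficulty arises only because you chose to relate consecutive \emph{same-label} meetings. The paper instead constrains $y_3$ using the \emph{first} subsequent meeting that involves the lone agent, of whatever pair type; part (2) of Lemma~\ref{lem:useful-facts} already guarantees that the collocated pair stays inside cylinders determined solely by their states at $t_i$ no matter how many times they re-meet \emph{each other} in the interim, so there is no ``resetting of the cylinder geometry'' to control and no composition of piecewise-affine maps along walks is needed. With that simplification your argument closes and coincides with the paper's.
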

\begin{proof}
The essence of the proof is to show that only finitely (i.e., at most $k$) different relative configurations can give rise to the same value $d$. If we prove this, then it means that in the entire trace of the exploration, the value $d$ can be incurred at most $k$ times. Otherwise,  some relative configuration would have to repeat, meaning by Lemma~\ref{lem:rel-conf-helper} that the three agents only explore grid points falling within a cylinder, so they cannot explore the entire $\mathbb{Z}^3$.

Now, consider the situation where $a_1$ and $a_2$ are collocated and $a_3$ is at distance $d(t_1)$ at time $t_1$. Moreover, suppose that the next time when two agents meet each other will be when $a_2$ meets with $a_3$ at some later time. We need to bound the number of possible coordinates of $a_3$ relative to $a_1$ at the beginning of the movement, i.e., at $t_1$. There are only finitely many directions and radii of cylinders within which $a_2$ can move, and similarly for $a_3$ (since they depend only on the states of agents). Consider one such cylinder for $a_2$ with direction $v_2$ and radius $r_2$. Similarly, consider one such cylinder which corresponds to $a_3$ with direction $v_3$ and radius $r_3$. The possible starting locations\footnote{Relative to $a_1$.} $y_3$ of $a_3$ have to satisfy
\begin{enumerate}
\item $||y_3||_1 = d(t_1)$,
\item $||v_2(t-t_1) - (v_3(t-t_1)+y_3)||_\infty \le r_1 + r_2$.
\end{enumerate}
It is easy to see that there are only finitely many vectors $y_3 \in \mathbb{Z}^3$ that satisfy the above conditions. First note that there are only finitely many error terms $e \in \mathbb{Z}^3$ such that $||e||_\infty \le r_1 + r_2$. Thus, it is sufficient to fix one such $e$ and show that there are finitely many $y_3$ that satisfy $||y_3||_1=d(t_1)$ and $v_2(t-t_1)-(v_3(t-t_1)+y_3)= e$. We can rewrite the second condition as $y_3 = v_2 (t-t_1) - v_3 (t-t_1) - e$, which gives us three equations and 4 unknowns (coordinates of $y_3$ and $t$). The fourth equation is given by $||y_3||_1=d(t_1)$. We can consider this to split into 8 cases depending on signs of coordinates of $y_3$, each case giving at most one solution to the overall system. Overall, we get that for each $v_2, r_2$ and each $v_3, r_3$ there can only be finitely many starting points $y_3$ of $a_3$ such that $a_2$ and $a_3$ do not miss each other, while exploring $\mathbb{Z}^3$ along the corresponding cylinders. Since the number of possible values of $v_2, r_2, v_3, r_3$ is also finite, the statement of the lemma follows.
\end{proof}

Combining the results proven so far, we can show the key lemma.

\begin{theorem}
Consider 3 agents in $\mathbb{R}^3$ that run a protocol for exploring  $\mathbb{Z}^3$. Suppose that by time $t$ the maximum distance of an agent from the true origin $0$ is at most $d$. Then the number of grid points visited by all three agents by time $t$ is $O(d^2)$. 
\end{theorem}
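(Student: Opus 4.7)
The plan is to partition the interval $[0,t]$ at pair-meeting events and bound both the number of resulting sub-intervals and the number of grid points newly explored within each sub-interval.

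First I count meetings. By Lemma~\ref{lem:useful-facts}(3) there is a fixed time $T$ (depending only on the protocol, not on $d$ or $t$) after which no triple meeting occurs; the exploration contribution during $[0,T]$ is a constant and is absorbed into $O(d^2)$. For any pair-meeting at time $t_i > T$, let $d(t_i)$ denote the distance from the collocated pair to the lone third agent. The hypothesis that every agent remains within distance $d$ of the origin throughout $[0,t]$ forces $d(t_i) \le 2d$. Lemma~\ref{lem:finite-coordinates} bounds by a universal constant $k$ the number of pair-meetings with $d(t_i) = d'$, so summing over $d' \in \{0,1,\ldots,2d\}$ yields at most $(2d+1)k = O(d)$ pair-meetings in $(T,t]$, and hence $O(d)$ sub-intervals overall.

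Next I bound the newly visited grid points within each sub-interval. By Lemma~\ref{lem:useful-facts}(1), an agent moving alone between its own meetings is confined to a cylinder whose direction and radius depend only on its starting state; by Lemma~\ref{lem:useful-facts}(2), a pair of agents moving together between consecutive third-party meetings occupies at most two cylinders (or a single shared one) whose parameters depend only on the starting state-pair. Thus at most three cylinders are active during any one sub-interval. Because the agents are finite automata, the collection of possible cylinder shapes is finite, and so every active cylinder has radius at most a universal constant $r_0$. Since every visited point lies in the ball $B$ of radius $d$ about the origin and any straight line intersects $B$ in a segment of length at most $2d$, the intersection of any cylinder of radius $r_0$ with $B$ contains at most $O(r_0^2 \cdot d) = O(d)$ integer points. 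Therefore each sub-interval contributes at most $O(d)$ newly visited grid points.

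Combining the two steps yields a total of $O(1) + O(d) \cdot O(d) = O(d^2)$ grid points visited by time $t$, which is the desired bound. The step requiring the most care is verifying that the cylinder radii are universally bounded by a constant $r_0$ independent of $d$ and $t$; this must be extracted from the finiteness of the FSA state space together with the fact that the cylinder parameters produced by Lemmas~\ref{lem:useful-facts}(1) and (2) depend only on the starting (possibly paired) states, so that taking the maximum radius over that finite list gives an honest universal constant. The remaining bookkeeping — charging each visited grid point to the sub-interval in which it is first seen, and confirming that the $O(d)$-per-sub-interval bound still holds when two agents travel together along a single cylinder rather than along two separate ones — is routine.
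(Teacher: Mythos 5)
Your proof is correct and follows essentially the same route as the paper's: partition time at pair-meeting events, use Lemma~\ref{lem:finite-coordinates} to bound the number of meetings by $O(d)$, and use the cylinder structure from Lemma~\ref{lem:useful-facts} to bound the points explored between consecutive meetings by $O(d)$. The extra care you take (the $2d$ triangle-inequality bound, the constant-length prefix before triple meetings cease, and the uniform bound $r_0$ on cylinder radii) fills in details the paper's shorter proof leaves implicit, but the decomposition and the key lemmas invoked are the same.
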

\begin{proof}
The assumption implies, in particular, that at all meeting times $t_i \le t$ the distances satisfy $d(t_i) \le d$.  Between meeting times, the agents explore along some cylinders. Whenever an agent explores along a cylinder, the agent visits only a linear number of grid points (since width of the cylinder is constant). Thus, an agent can only explore $O(d)$ grid points between two meetings. By Lemma~\ref{lem:finite-coordinates} we have $|\{i \mid d(t_i) \le d\}|\le d k$, i.e., there can only be a linear (in $d$) number of meetings. Multiplying the two estimates gives an upper bound $O(d^2)$ on the total number of visited grid points.
\end{proof}

The following easy corollary is one of the main conclusions of this subsection.

\begin{corollary}
Consider 3 agents in $\mathbb{R}^3$ that run a protocol for exploring  $\mathbb{Z}^3$. In order to visit all grid points in the ball of radius $D$ the distance of some agent from the origin must have been $\Omega\left(D^{3/2}\right)$.
\end{corollary}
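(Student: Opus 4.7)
The plan is to derive the corollary as a direct consequence of the preceding theorem, which states that if by time $t$ the maximum distance of any agent from the origin has been at most $d$, then the number of grid points visited by all three agents is $O(d^2)$. So the strategy is simply a counting argument: compare the number of grid points inside the ball of radius $D$ (which grows like $D^3$) with the upper bound $O(d^2)$ on what can be covered while keeping every agent within distance $d$ of the origin.

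More precisely, I would argue by contradiction. Let $d$ denote the maximum distance from the origin attained by any agent up to the moment when all grid points in the ball of radius $D$ have been visited. Suppose for contradiction that $d = o(D^{3/2})$. Applying the preceding theorem with this value of $d$ and with $t$ equal to the time at which the exploration of the ball of radius $D$ is completed, the total number of grid points visited by all three agents up to time $t$ is $O(d^2) = o(D^3)$.

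On the other hand, the ball of radius $D$ in the Manhattan metric contains $\Theta(D^3)$ grid points (as noted in Section~\ref{sec:model}, and because $V(D) = \frac{2^n}{n!}D^n$ for $n=3$), and by hypothesis every such grid point must have been visited. This forces the number of visited points to be at least $\Theta(D^3)$, contradicting the $o(D^3)$ upper bound. Hence the supposition $d = o(D^{3/2})$ is false, and we conclude $d = \Omega(D^{3/2})$.

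There is no real obstacle here: all the heavy lifting was done in the theorem and its supporting lemmas (bounding the number of meetings by Lemma~\ref{lem:finite-coordinates}, bounding inter-meeting exploration by the cylinder arguments in Lemma~\ref{lem:useful-facts}). The only care needed is to ensure the comparison is made at the correct time, namely the first moment at which every grid point in the ball of radius $D$ has been visited, so that the $O(d^2)$ bound from the theorem applies with $d$ equal to the maximum distance from the origin attained up to that moment.
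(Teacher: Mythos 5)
Your proposal is correct and matches the paper's intended derivation exactly: the paper leaves the corollary as an immediate consequence of the preceding theorem, obtained by comparing the $O(d^2)$ bound on visited points with the $\Theta(D^3)$ grid points in the ball of radius $D$, which forces $d = \Omega(D^{3/2})$. Your only addition --- being careful to apply the theorem at the first time the ball is fully covered, with $d$ the maximum distance attained up to that time --- is a reasonable precaution and does not change the argument.
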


We believe that it should be possible to extend the above result to the case of general $n$ and $k$ agents. Namely, the desired statement is that if $k$ agents visit all grid points inside the ball of radius $D$ then one of the agents has to visit a location that is $\Omega\left(D^{\frac{n}{k-1}}\right)$ away from the origin. The proof, which would be based on extended versions of Lemmas~\ref{lem:meet-io} and~\ref{lem:finite-coordinates}, would proceed by induction on the number of agents. The formal proof is deferred to the complete version of the paper.

\section{Unoriented Grids}
\label{sec:unoriented}
In this section we consider exploration of unoriented grids. In such grids, the incident edges to each node are labelled by different labels from $\{0,1,\dots, 2n-1\}$. Note that each edge receives two labels (port numbers), one on each end. However, no global consistency among edge labels can be assumed. This, together with agent's finite memory, means that a lone agent cannot cross any non-constant distance, as the irregular nature of the port labels would lead it astray, never to meet any other agent. It is, therefore, an interesting question to ask ``How many additional agents are necessary to solve the problem in unoriented grids?''

\begin{figure}[h]
  \centerline{\includegraphics[width=12cm]{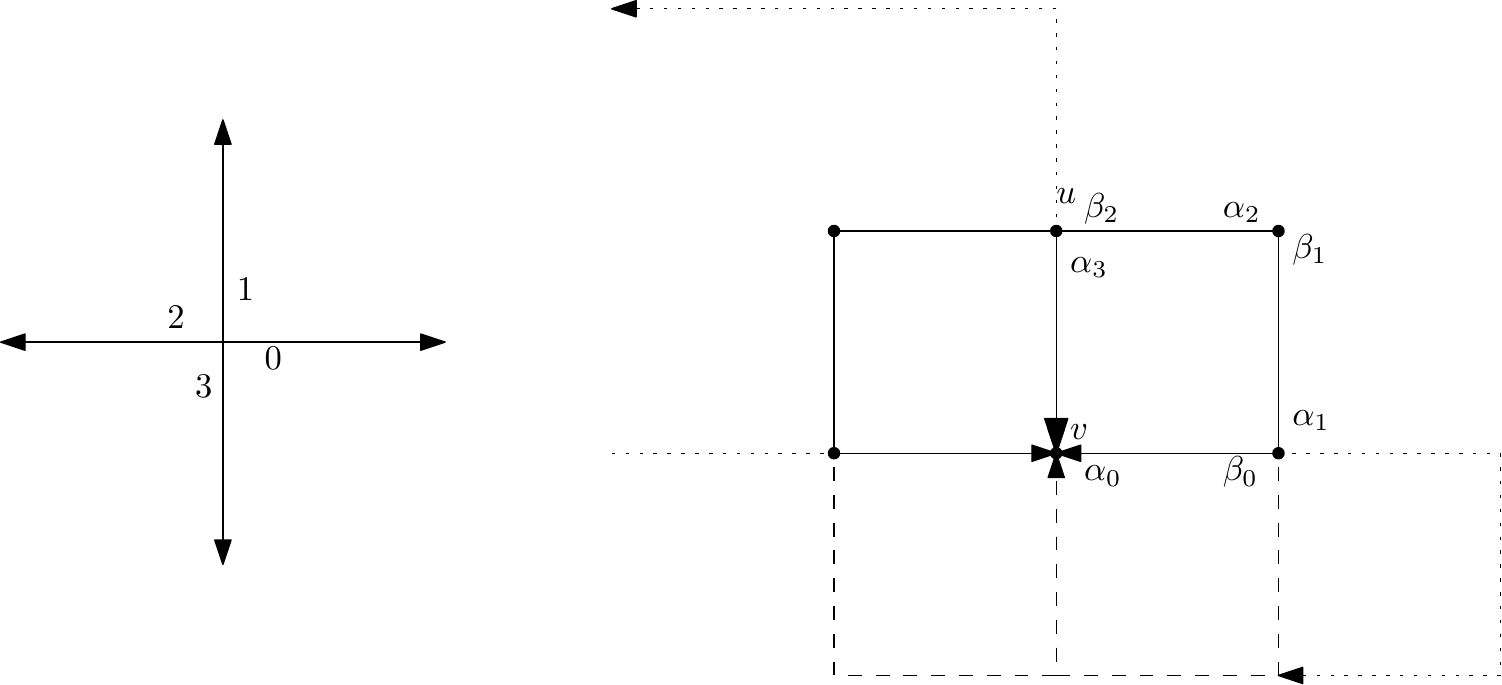}}
  \caption{Left: The global directions. Right: Handrail in 2D. Full lines correspond to $\alpha$ that lead to computing $v_0$ and $v_2$. Dashed lines returned to $v$ but did not satisfy $\alpha_3 = 3$. Dotted lines did not return to $v$ or backed on themselves}\label{fig:unor}
\end{figure}

In this section we show that one additional agent is sufficient in the semi-synchronous model, while two additional agents are sufficient in the synchronous model. This result is obtained by employing and generalizing from $2D$ the {\em handrail} technique of \cite{Mans97}, which allows a moving agent to establish by local exploration the relationship between the port labels of vertices it is moving through, in effect carrying the orientation along. 
However, an auxiliary agent is needed to achieve this.  As all our algorithms using $k$ semi-synchronous agents are in fact algorithms for one active agent using the remaining agents as tokens with IDs, a single auxiliary agent is sufficient. In the case of synchronous agents, at most two of them are active agents, which implies that two auxiliary agents are sufficient.

In the remainder of this section, we sketch the {\em handrail} technique. Combining it with the algorithms for the oriented grid is straightforward for the semi-synchronous case, while a bit more care about the timing is needed in the synchronous case.

Let $\oplus$ denote addition modulo $2n$. Let $i\in 0,1,\dots, n-1$. Then the global direction $i$ corresponds to increasing the position in dimension $i$, while direction $i+n$ corresponds to decreasing the position in dimension $i$.

Let $i_u$ for $i\in\{0,1,\dots, 2n-1\}$ denote the port label at node $u$ of the edge leaving $u$ in the direction $i$, and let $O_u$, the orientation at $u$, denote $\{i_u\}_{i=0}^{2n-1}$.

Assume the agent $a$ knows the orientation $O_u$ at the current node $u$ and it wants to move, as in the algorithm for oriented grid, to $u$'s neighbour $v$.
Applying the following procedure allows $a$ to compute $O_v$; this means $a$ can maintain the global orientation while moving.
\begin{algorithm}[htb]
	\caption{Handrail: Maintaining orientation when moving from $u$ to $v$}
	\label{alg:handrail}
	\begin{algorithmic}[1]
		\State On entry: $a$ is in $u$ and knows $O_u$, wants to move to $u$'s neighbour $v$.
		\State On exit: $a$ is in $v$ and knows $O_v$.
		\State $a$ goes to $v$ along direction $i$
        \State let $p$ be the label of the arrival port in $v$. Set ${i\oplus 2}_v = p$
        \State $a$ drops the auxiliary token/agent in $v$
		\For {all $\alpha\in \{0,1,\dots, 2n-1\}^4$}
			\State $a$ interprets $\alpha$ as a sequence of port labels and executes the corresponding $4$-step walk, collecting in $\beta$ the arrival ports along this walk
            \If {$a$ is not collocated with its auxiliary token, or $\alpha_{j+1}=\beta_j$ for some $j\in \{0,1,2\}$}
                \State return to $v$ using $\beta$ and proceed to the next $\alpha$
            \ElsIf {$\alpha_3 = i$ and $\beta_2 = j_u$}
                \State set $j_v = \alpha_0$
            \Else
                \State proceed to the next $\alpha$
            \EndIf
		\EndFor
        \State set $i_v$ to the remaining unused port label
	\end{algorithmic}
\end{algorithm}

The correctness of Algorithm \ref{alg:handrail} is based on the fact that in a grid, the only way to return to $v$ via direction $i$ after a $4$-step walk which never backtracks on itself is when the four steps were in directions $j,-i,-j,i$ for $j\notin \{i,-i\}$.
Note that the cost, i.e., the number of moves or time, of  Algorithm~\ref{alg:handrail} is $O(n^4)$, i.e. a constant. With a more careful approach (at a cost of more complex presentation), this can be reduced to $O(n^3)$.

\section{Conclusions and Open Questions}
\label{sec:conclusion}
We studied the exploration of $n$-dimensional grids for $n \geq 3$ by finite state automata agents. We showed the surprising result that three randomized synchronous agents suffice to find a treasure in an $n$-dimensional grid for any $n$; this is optimal in the number of agents. Our strategy can also be implemented by four randomized asynchronous agents, or four deterministic synchronous agents, or five deterministic asynchronous agents. For the three-dimensional case, we gave a different algorithm for the deterministic asynchronous case that uses only 4 agents, and is optimal. Our algorithms for $n\ge 4$ require agents to travel far away from the origin, i.e., exponential in $D$ distance away, while looking for a treasure which is located at distance $D$ from the origin. We also considered the question of whether it is possible to design algorithms that use few agents and do not require travelling much further than distance $D$ away from the origin in order to explore the entire ball of radius $D$ around the origin. We answered the question positively by describing an algorithm that uses $O(\sqrt{n})$ semi-synchronous deterministic agents that never travel beyond $2D$ while exploring the ball of radius $D$. We also showed that $3$ synchronous deterministic agents in $3$ dimensions must travel $\Omega(D^{3/2})$ away from the origin.

There remain many interesting open questions on the exploration of the $n$-dimensional grids. Is it possible for 4 deterministic semi-synchronous agents to explore an $n$-dimenstional grid for $n \geq 4$? For $n \geq 3$, can  exploration of an $n$-dimensional grid be achieved 3 randomized semi-synchronous agent or deterministic synchronous agents? What is the minimal number of agents that achieve polynomial time exploration? What is the minimal number of agents such that the distance of the furthest visited node from the origin is limited to polynomial in $D$ (eg. linear in $D$; $D+o(D)$)? Is it possible to save an agent in the case of synchronous unoriented grids?

\bibliographystyle{plain}
\bibliography{ref} 
\end{document}